\documentclass[a4paper,fleqn]{cas-sc}

\usepackage[numbers]{natbib}
\usepackage{tikz}
\usepackage{algorithm}
\usepackage{algpseudocode}
\usepackage{amsmath}
\usepackage{amsfonts}

\algrenewcommand\algorithmicrequire{\textbf{Input:}}
\algrenewcommand\algorithmicensure{\textbf{Output:}}

\let\set\mathbb
\def\<#1>{\langle#1\rangle}

\def\eatspace#1{#1}
\def\step#1#2{%
  \par\kern1pt\dimen144=#2em\advance\dimen144by1.67em
  \hangindent=\dimen144\hangafter=1
  \leavevmode\rlap{\small#1}\kern\dimen144\relax\eatspace}

\newcommand{\N}{\mathbb{N}}
\newcommand{\R}{\mathbb{R}}

\renewcommand{\O}{\operatorname{O}}
\renewcommand{\Gamma}{\varGamma}
\newcommand{\A}{\mathbf{A}}
\newcommand{\B}{\mathbf{B}}
\def\C{\mathbf{C}}

\newcommand{\Tens}[1]{\langle #1 \rangle}

\newtheorem{theorem}{Theorem}
\newtheorem{proposition}[theorem]{Proposition}
\newtheorem{lemma}[theorem]{Lemma}
\newdefinition{definition}[theorem]{Definition}

\newtheorem{corollary}[theorem]{Corollary}
\newdefinition{question}{Question}
\newdefinition{remark}{Remark}
\newproof{proof}{Proof}

\shorttitle{Exploiting Structure in Tensor Decompositions for Matrix Multiplication}
\shortauthors{M. Kauers et al.}

\begin{document}

\title[mode=title]{Exploiting the Structure in Tensor Decompositions for Matrix Multiplication}

\tnotemark[1,2,3]
\tnotetext[1]{M. Kauers was supported by the Austrian FWF grants 10.55776/PAT8258123,
	10.55776/I6130, and 10.55776/PAT9952223}
\tnotetext[2]{J. Moosbauer was supported by the Austrian FWF grant 10.55776/PAT8258123}
\tnotetext[3]{I. Wood was supported by the Austrian FWF grant 10.55776/PAT8258123}

\author[1]{Manuel Kauers}[orcid=0000-0001-8641-6661]
\ead{manuel.kauers@jku.at}

\author[1]{Jakob Moosbauer}[orcid=0000-0002-0634-4854]
\ead{jakob.moosbauer@jku.at}

\author[1]{Isaac Wood}
\ead{isaac.wood@jku.at}

\affiliation[1]{
	organization={Institute for Algebra, Johannes Kepler University},
	city={Linz},
	postcode={A4040},
	country={Austria}
}

\begin{abstract}
	We present a new algorithm for fast matrix multiplication using tensor
	decompositions which have special features.
	Thanks to these features we
	obtain exponents lower than what the rank of the tensor decomposition
	suggests.
	In particular for $6\times 6$ matrix multiplication we reduce the
	exponent of the recent algorithm by Moosbauer and Poole from $2.8075$ to
	$2.8019$, while retaining a reasonable leading coefficient.
\end{abstract}

\begin{keywords}
	Bilinear complexity \sep Strassen's algorithm \sep Tensor rank
\end{keywords}

\maketitle

\section{Introduction}\label{sec:introduction}

Since Strassen's seminal discovery that two $n\times n$ matrices can be multiplied in $\O(n^{2.81})$ arithmetic operations~\cite{STRASSEN1969}, there has been a race to reduce the upper bound on $\omega$, the exponent of matrix multiplication.
Following Strassen's breakthrough, Bini et al.~\cite{BINI1979} introduced the concept of approximate algorithms and border rank, allowing for a small improvement on the upper bound, finding $\omega \leq 2.78$.
This line of research continued to Sch\"onhage's Asymptotic Sum Inequality (ASI)~\cite{SCHONHAGE1981}, allowing for algorithms that compute disjoint matrix multiplications simultaneously, giving a bound of $\omega \leq 2.52$. Later improvements, based on Strassen's laser method~\cite{STRASSEN1986} and the Coppersmith-Winograd approach~\cite{COPPERSMITH1990}, further reduced the bound on $\omega$ to the currently best upper bound of $\omega < 2.371339$ by Alman et al.~\cite{ALMAN2024}.

While there has been a lot of work done to reduce the exponent of matrix multiplication, approximate algorithms (like those used in all the fastest algorithms since 1979) in general either require unreasonable precision or have such large leading coefficients that they are
impractical. Most current research into practical algorithms is based on exact algorithms, either trying to reduce the number of multiplications (using, e.g., flip graphs \cite{KAUERS2022,KAUERS2023,KAUERS2025,MOOSBAUER2023,MOOSBAUER2025,PERMINOV2025,WOOD2025}, numerical optimization~\cite{SMIRNOV2013, KAPORIN2024, NOVIKOV2025}, or reinforcement learning~\cite{FAWZI2022}) or reducing the number of additions (Winograd's variant of Strassen's algorithm, published in~\cite{PROBERT1976}, alternative basis algorithms~\cite{KARSTADT2017, BENIAMINI2019, SCHWARTZ2024}, flip graph search~\cite{PERMINOV2025}, or common subexpression elimination~\cite{MARTENSSON2025}).
Sch\"onhage's 1981 paper~\cite{SCHONHAGE1981} uses that an algorithm by Pan~\cite{PAN1984} has a special property. After 3 iterations, 8 of the recursive calls form a $2\times 2$ matrix multiplication, which can then be computed using Strassen's algorithm, thereby saving one multiplication.  Schwartz and Zwecher~\cite{SCHWARTZ2025} recently used a similar idea to improve some of Pan's trilinear aggregation algorithms.  The concept also aligns with Romani's generalization of the ASI~\cite{ROMANI1981}, which allows for a direct sum of asymmetric matrix multiplication tensors on the right-hand side.  Schwartz and Zwecher give an explicit algorithm with a $44 \times 44$ base
case, that achieves the best exponent among exact algorithms with base case smaller than $1000$, while Sch\"onhage's and Romani's results are purely theoretical bounds on the exponent, with no explicit algorithm provided.
In this paper we propose a new recursive matrix multiplication algorithm that uses the same idea.
If some recursive calls share one of the inputs or have an output that is used in multiple positions, then they are treated as a single matrix multiplication of larger size.
Applying this technique repeatedly improves the exponent of a matrix multiplication algorithm without reducing the number of multiplications in the base case.
In particular, we improve the recent $6\times 6$ matrix multiplication algorithm by Moosbauer and Poole~\cite{MOOSBAUER2025} from an exponent of $2.8075$ to $2.8019$, compared to the $2.8073$ exponent of Strassen's algorithm.
Our algorithm outperforms the standard algorithm for $n \geq 1000$ in terms of total operation count.
This does not necessarily mean that in an implementation one would already observe a speedup for such matrix sizes, but it clearly puts it in the realm of practical algorithms by Pan's definition~\cite{PAN1984}, who generously drew the line at $n=10^{20}$.
We ran a search for such algorithms,  first using a flip graph search to find algorithms using a minimal number of multiplications while retaining the special structure needed to apply our new recursive method.
Then we optimize the number of additions using DeGroote actions.
These steps produce algorithms that are valid only modulo~$2$, thus we apply Hensel lifting to lift the algorithms to work over the integers, and finally we reduce the number of additions further by M\aa rtensson and Wagner's common subexpression elimination method~\cite{MARTENSSON2025}.
We find improvements in the exponent for several small base cases including $3\times 3$ matrix multiplication, where we reduce the exponent from Laderman's $2.854$~\cite{LADERMAN1976} to $2.836$ (for $K=\set Z_2$) or $2.843$ (for any field).
\section{Background} \label{sec:background}

Let $K$ be a field and $R$ be a $K$-algebra, and let $\A\in R^{n \times m},\B\in R^{m\times p}$.
Strassen's algorithm~\cite{STRASSEN1969} uses a divide and conquer strategy to multiply matrices faster than the standard algorithm.
First, it reduces the multiplication of two $n\times n$ matrices to multiplying two $2\times 2$ matrices whose entries are $\frac{n}{2} \times \frac{n}{2}$ matrices using block matrix multiplication.
Then it computes this $2\times 2$ block matrix product using only $7$ multiplications of the subblocks by linearly combining them in a clever way.
Using this strategy recursively gives an algorithm to multiply $n\times n$ matrices using $\O(n^{\log_2(7)})$ operations in $R$.
The exponent only depends on the number of multiplications, since the cost of the recursive call dominates the cost of the linear combinations.
Usually, these algorithms are written using the language of tensors.
\begin{definition}
	Let $n,m,p\in \N$ and $A = R^{n\times m},B = R^{m\times p}, C = R^{p\times n}$.
	Denote by $\{a_{ij} : 1 \leq i \leq n, 1 \leq j \leq m\}$ the standard basis of $A^*$, $\{b_{jk} : 1 \leq j \leq m, 1 \leq k \leq p\}$ the standard basis of $B^*$ and $\{c_{ki} : 1 \leq k \leq p, 1 \leq i \leq n\}$ the standard basis of $C$.  So $a_{ij}$ is the linear functional that takes a matrix in $A$ and returns its $(i,j)$-th entry, and similarly for $b_{jk}$.
	We define the matrix multiplication tensor to be
	\[
		\Tens{n,m,p} = \sum_{i,j,k=1}^{n,m,p} a_{ij} \otimes b_{jk} \otimes c_{ki} \in A^* \otimes B^* \otimes C .
	\]
\end{definition}

This tensor encodes matrix multiplication in the following sense:  denote by $\operatorname{Bil}(A,B;C)$ the space of all bilinear maps from $A\times B$ to $C$.
Then we can define an isomorphism $\Phi \colon A^* \otimes B^* \otimes C \to \operatorname{Bil}(A,B;C)$ by
\[
	\Phi(a_{i_2j_1}\otimes b_{j_2k_1} \otimes c_{k_2i_1}) ( \A, \B) = a_{i_2j_1}(\A) b_{j_2k_1}(\B)\cdot \C_{i_1k_2} = \A_{i_2j_1} \B_{j_2k_1} \C_{i_1k_2}
\]
where $\A \in A,\B \in B$ are matrices, $\A_{ij},\B_{ij}$ are the $(i,j)$th entries of $\A$ and~$\B$, respectively,
and $\Gamma_{ij}\in C$ is the matrix with all zeros except for a $1$ at the $(i,j)$-th entry.
Note the $c_{ij}$ corresponds to $\C_{ji}$ in the product matrix;
this is a standard convention to keep the cyclic nature of matrix multiplication.
\begin{definition}
	We call a tensor $T\in A^* \otimes B^* \otimes C$ a restriction of $T' \in A'^* \otimes B'^* \otimes C'$, denoted by $T \leq T'$, if there are homomorphisms $\phi_A \colon A \to A',\phi_B \colon B \to B',\phi_C \colon C' \to C$ with $T = (\phi_A^* \otimes \phi_B^* \otimes \phi_C) T'$.
	If we need to refer to the specific homomorphism $\phi = (\phi_A^*\otimes \phi_B^* \otimes \phi_C)$, we write $T \leq_\phi T'$.

	If $\phi_A,\phi_B,\phi_C$ are in fact isomorphisms, we call $T$ and $T'$ isomorphic and write $T\cong T'$.

	We write the unit tensor of size $k$ as $\Tens{k} = \sum_{i=1}^k e_i\otimes e_i \otimes e_i \in R^k \otimes R^k \otimes R^k$ where $\{e_1,\dots,e_k\}$ is the standard basis for $R^k$.
\end{definition}

If a tensor $T$ is a restriction of a tensor $T'$ this means that any algorithm to compute $\Phi(T')$ can be transformed into one to compute $\Phi(T)$ by applying the linear maps $\phi_A,\phi_B,\phi_C$ to the inputs and output.
In particular, if we have a restriction of the form $\Tens{n,m,p} \leq \Tens{r}$, then we have an algorithm to multiply $n\times m$ and $m\times p$ matrices using $r$ multiplications in $R$.
A restriction of the form $T \leq \Tens{r}$ gives rise to a decomposition of $T$ into $r$ summands of the form
\[
	T = \sum_{i=1}^r a_i\otimes b_i\otimes c_i.
\]

We then call $r$ the \emph{rank} of this decomposition and a tensor that can be written in the form $a \otimes b \otimes c$ a \emph{rank-one} tensor.
The rank of a tensor $T$ is defined as the smallest integer $r$ such that $T$ has a decomposition of rank $r$.
Together with Strassen's recursive block matrix multiplication strategy, this gives the following fundamental result.
\begin{theorem}\label{thm:basicexponent} 
	If $\Tens{n,m,p}\leq \Tens{r}$ then there is an algorithm to multiply $N\times N$ matrices in $\O(N^{3\log_{nmp}(r)})$ operations in $R$.
\end{theorem}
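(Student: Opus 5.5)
The plan is to first symmetrize the decomposition, so that we are dealing with square matrices, and then run Strassen's recursive block strategy and solve the resulting recurrence.

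\emph{Symmetrization.} Matrix multiplication tensors are invariant under cyclic permutation of the factors, up to isomorphism: $\Tens{n,m,p}\cong\Tens{m,p,n}\cong\Tens{p,n,m}$. Hence $\Tens{m,p,n}\leq\Tens{r}$ and $\Tens{p,n,m}\leq\Tens{r}$ as well. The tensor (Kronecker) product of restrictions is again a restriction, since we may take tensor products of the maps $\phi_A,\phi_B,\phi_C$. Moreover $\Tens{n,m,p}\otimes\Tens{n',m',p'}\cong\Tens{nn',mm',pp'}$ and $\Tens{r}\otimes\Tens{r'}\cong\Tens{rr'}$. Multiplying the three restrictions together therefore gives $\Tens{q,q,q}\leq\Tens{r^3}$ with $q=nmp$. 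The target exponent is $3\log_{nmp}(r)=\log_q(r^3)$, so it suffices to prove the following: if $\Tens{q,q,q}\leq\Tens{s}$ then $N\times N$ matrices can be multiplied in $\O(N^{\log_q s})$ operations.

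\emph{Recursion.} First assume $N=q^k$. Write each input as a $q\times q$ block matrix with blocks of size $N/q$. The entries of a block matrix lie in the $K$-algebra $R^{(N/q)\times(N/q)}$, so the bilinear algorithm given by the decomposition applies verbatim. Concretely:
\begin{itemize}
\item form the $s$ linear combinations of blocks $\phi_A(\A)$ and $\phi_B(\B)$;
\item perform $s$ products of size $N/q$ recursively;
\item recombine the products via $\phi_C$.
\end{itemize}
The linear steps cost at most $c\,(N/q)^2$ operations for a constant $c$ depending only on the decomposition. This gives the recurrence
\[
T(N)\le s\,T(N/q)+c N^2 .
\]
If $s>q^2$, this yields $T(N)=\O(N^{\log_q s})$: the sum $\sum_i s^i c (N/q^i)^2$ is geometric with ratio $s/q^2>1$, so it is dominated by its last term. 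If $s\le q^2$, the exponent $\log_q s$ would be at most $2$. This case cannot occur, because matrix multiplication rank is at least $q^2$ (the $B$-slices span all $q^2$ basis directions). For general $N$, pad with zeros to the next power of $q$. This at most multiplies $N$ by $q$ and changes only the constant.

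\emph{Main obstacle.} The delicate step is verifying that restrictions are compatible with tensor products and cyclic permutation. In particular, we need that the Kronecker product of matrix multiplication tensors really is a matrix multiplication tensor, with the index conventions $c_{ki}$ of the paper. We also need that the bilinear algorithm is valid over a noncommutative $R$ (block matrices). The latter holds because the decomposition is a tensor identity in which each product keeps $a$-factors on the left and $b$-factors on the right. The recurrence analysis itself is routine.
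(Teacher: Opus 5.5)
Your route is the standard one, and it uses exactly the ingredients the paper relies on (the paper states Theorem~\ref{thm:basicexponent} without a separate proof). You symmetrize by tensoring the three cyclic rotations to get $\Tens{nmp,nmp,nmp}\leq\Tens{r^3}$, as in Sections~3--4, and then solve the block recurrence as in the proof of Theorem~\ref{thm:basiccost}. The following parts are all fine: the cyclic symmetry, the compatibility of restrictions with Kronecker products, the validity over noncommutative $R$, and the padding to a power of $q$, which costs a constant factor of at most $s$.

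One step has an insufficient justification: the case distinction between $s>q^2$ and $s\leq q^2$.
\begin{itemize}
\item Your slice argument shows only that the rank of $\Tens{q,q,q}$ is at least $q^2$. That excludes $s<q^2$ but not $s=q^2$.
\item In the boundary case $s=q^2$, the geometric sum has $k=\log_q N$ equal terms. The recurrence then gives only $\O(N^2\log N)$, not the claimed $\O(N^{\log_q s})=\O(N^2)$.
\item So you need the strict statement that the rank of $\Tens{q,q,q}$ exceeds $q^2$ for $q\geq 2$. You must also exclude $nmp=1$, where $\log_{nmp}$ is undefined.
\end{itemize}

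The strict bound is standard and has a short proof. Suppose $\Tens{q,q,q}\leq\Tens{q^2}$. Then $\phi_C$ is surjective, since its image contains all products $\A\B$, and it maps between spaces of equal dimension. Exactly as in the proof of Proposition~\ref{prop:symmetric_assumptions}, it follows that $\phi_A$ and $\phi_B$ are isomorphisms too, so $\Tens{q,q,q}\cong\Tens{q^2}$.

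This is impossible for $q\geq 2$. For a tensor isomorphic to a unit tensor, the slices multiplied by the inverse of a fixed invertible slice commute pairwise. For matrix multiplication, normalizing by the slice at the identity matrix, these maps are the left multiplications $\B\mapsto\A\B$, which do not commute when $q\geq 2$.

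The paper uses the same fact tacitly in Theorem~\ref{thm:basiccost}, through the denominator $r-n^2$.
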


Using the decomposition Strassen found, we get $\Tens{2,2,2} \leq \Tens{7}$ and hence $\omega \leq \log_2(7)$.
Further analysis allows to compute the leading coefficient of this algorithm as well, as we will see next.
While the exponent is given by the rank of a decomposition, i.e. the number of multiplications, the leading coefficient depends on the number of additions and scalar multiplications needed in the restriction.
We will denote this number by $A(\phi)$ if the restriction is given by~$\phi$.
The homomorphism for a restriction is in general not unique, but in our case every restriction is witnessed by an explicit homomorphism, so we just write $A$ when the homomorphism is clear from the context.
Using this definition, Strassen's original algorithm has $A=18$. Using this we can compute the leading coefficient of the algorithm, which is done in general in Theorem \ref{thm:basiccost}.
Although this result is well-known, we present its proof in detail since as it establishes the structure for the more complex proof in Section \ref{sec:newstuff}.
\begin{theorem} \label{thm:basiccost}
	For a restriction $\Tens{n,n,n} \leq_\phi \Tens{r}$ with $ \omega_0 = \log_n(r) < 3$, the algorithm from Theorem \ref{thm:basicexponent} needs at most
	\[
		\left( 2(n-1)^{3-\omega_0} + \frac{r(2^{\omega_0}-1)+4A(\phi)}{r-n^2}(n-1)^{2-\omega_0} \right) N^{\omega_0} + \O(N^2)
	\]
	operations in $K$ to multiply $N \times N$ matrices.
\end{theorem}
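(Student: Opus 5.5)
The plan is to analyse the cost recurrence of the recursive algorithm from Theorem~\ref{thm:basicexponent}, together with a padding and base-case strategy, and then to bound the resulting closed form. I would fix a recursion depth $k$ and a base size $m$, pad the input matrices with zeros to size $n^k m \ge N$, recurse $k$ times using $\phi$, and multiply the base $m\times m$ blocks by the standard algorithm.

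\textbf{Step 1: the recurrence and its closed form.} Let $T(s)$ be the number of operations in $K$ for size $s=n^j m$. One level of recursion on $s\times s$ matrices applies $\phi_A,\phi_B,\phi_C$ to blocks of size $(s/n)\times(s/n)$, which costs $A(\phi)(s/n)^2$ operations in $K$, and then makes $r$ recursive calls. So
\[
T(n^j m)= r\,T(n^{j-1}m)+A(\phi)\,(n^{j-1}m)^2,\qquad T(m)=2m^3-m^2 .
\]
Unrolling this with the geometric sum $\sum_{i<k} r^{k-1-i} n^{2i}=(r^k-n^{2k})/(r-n^2)$ gives
\[
T(n^k m)=r^k(2m^3-m^2)+A(\phi)\,m^2\,\frac{r^k-n^{2k}}{r-n^2}
\le r^k\Bigl(2m^3+\frac{A(\phi)m^2}{r-n^2}\Bigr).
\]
This is where the factor $r-n^2$ in the denominator comes from.

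\textbf{Step 2: choosing $k$ and $m$.} Take $k=\lfloor\log_n N\rfloor-c$ for a suitable constant shift $c$, and $m=\lceil N/n^k\rceil$. Then $m\le N/n^k+1$ and $r^k=(n^k)^{\omega_0}$. Substituting $n^k\le N/(m-1)$, or more carefully writing $N/n^k=m-\theta$ with $\theta\in[0,1)$, turns $r^k m^3$ into roughly $m^{3-\omega_0}N^{\omega_0}$. A secondary term appears because padding by up to one row per block increases the cost relative to the unpadded size. Bounding $(x+1)^{\omega_0}-x^{\omega_0}$ by a convexity or mean-value estimate, which is tight with value $2^{\omega_0}-1$ at $x=1$, should produce the $r(2^{\omega_0}-1)$ contribution in the numerator.

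\textbf{Step 3: normalising the base size.} Choose the shift so that the effective base size is governed by $n-1$. With that normalisation the leading term becomes $2(n-1)^{3-\omega_0}N^{\omega_0}$. The $A(\phi)$ term becomes a multiple of $(n-1)^{2-\omega_0}N^{\omega_0}/(r-n^2)$, and the factor $4$ absorbs the ratio $(m/(m-1))^2$ or a similar rounding loss. All remaining contributions, including the $-m^2$ in the base case and the $-n^{2k}$ term, are $\O(N^2)$.

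The main obstacle is Step~2/3: choosing $k$ and $m$ in exactly the right way, and carrying out the rounding and padding estimates sharply enough to land on the stated constants $2(n-1)^{3-\omega_0}$, $2^{\omega_0}-1$ and $4A(\phi)$. If plain zero-padding does not give these constants, I would first try dynamic peeling instead. In that approach the leftover rows and columns, fewer than $n$ per level, are handled by matrix–vector products costing $\O(\text{size}^2)$ per level. I would then check whether the $(2^{\omega_0}-1)$ term arises from summing those peeling costs against the recursion weights $r^j$.
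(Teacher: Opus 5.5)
Step~1 is correct. Padding once to $n^k m$ with a free base size is a genuinely different route from the paper's. The paper analyses the per-level recurrence $T(N)=rT(\lceil N/n\rceil)+A\lceil N/n\rceil^2$, using the standard algorithm only for $N<n$, by induction on $N$ with the ansatz $T(N)\le LN^{\omega_0}-dN^2$.

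However, Steps~2--3, which you yourself flag as the obstacle, are the whole proof, and they contain no actual estimate. Write $x=N/n^k$, $m=\lceil x\rceil$, and let $L$ be the constant in the statement. You must show that your rule for choosing $k$ always gives
\[
\frac{2m^3-m^2}{x^{\omega_0}}+\frac{A(\phi)\,m^2}{(r-n^2)\,x^{\omega_0}}\le L .
\]
Here the padding loss $(m/x)^{\omega_0}$ is a constant factor, not a lower-order term.

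Your proposed normalisation makes this inequality false. If base sizes are kept below $n$, then $N=n^K+1$ forces $m=2$ with $x=1+n^{-K}$, and the ratio tends to $12+4A/(r-n^2)$. For $n=3$, $r=23$ the target is $L\approx 7.88+0.158\,A$, which is smaller for every $A\ge 0$.

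Nor can the constants be matched term by term as you describe. In the paper, $2^{\omega_0}-1$ is the constant in $(x+1)^{\omega_0}\le x^{\omega_0}+(2^{\omega_0}-1)x^{\omega_0-1}$, and the $4$ comes from $(x+1)^2\le 4x^2$ (both for $x\ge 1$). Both are applied at every level and feed the coefficient $d=\frac{r(2^{\omega_0}-1)+4A}{r-n^2}$ of the subtracted $N^2$ term. The powers of $n-1$ arise only from checking the base cases $N\le n-1$. Nothing there corresponds to a factor like $(m/(m-1))^2$.

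You should also know that the stated constant cannot be reached for the algorithm the paper actually analyses, so imitating that algorithm is not an option. In the paper's chain, the term $c(k/n)^{\omega_0-1}$ has lost a factor $L$. With it restored, the induction step needs $d(r-n^2)\ge rLc+4A$. That is impossible, because the base case $N=1$ forces $L\ge d+1$ while $rc>r-n^2$.

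Concretely, for Strassen ($n=2$, $r=7$, $A=18$) the paper's recurrence gives
\[
\frac{T(2^j+1)}{7^j}\to 25+\frac{18}{7}\sum_{t\ge 0}\frac{(2^t+1)^2}{7^t}=41.2>40 .
\]

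So a correct proof has to change the algorithm as you do, by allowing larger base cases. For $n=3$, $r=23$, taking $x\in[n,n^2)$ does satisfy the displayed inequality for all $A$. The missing ingredient is a proof that, for every admissible $n$, $r$, $A(\phi)$, some rule for $k$ satisfies that inequality. Your proposal does not provide this, and the peeling fallback does not address it.
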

\begin{proof}
	We denote by $T(N)$ the total number of operations the algorithm needs to multiply $N \times N$ matrices.
	For $1\leq N < n$ we use the standard algorithm, so $T(N) = 2N^3 - N^2$.
	For $N \geq n$, we compute $r$ recursive multiplications of size $\lceil N/n \rceil \times \lceil N/n \rceil$, and need $A \lceil N/n \rceil^2$ operations to form the linear combinations.
	The ceiling accounts for zero padding to the next multiple of $n$.
	Hence, we have the recurrence relation
	\[
		T(N) = r T(\lceil N/n \rceil) + A \lceil N/n \rceil^2
	\]
	for all $N\geq n$.
	We will show by induction on $N$ that
	\[
		T(N) \leq L N^{\omega_0} - d N^2
	\]
	for $L = 2(n-1)^{3-\omega_0} + \frac{r(2^{\omega_0}-1)+4A}{r-n^2}(n-1)^{2-\omega_0}$ and $d$ to be determined soon.
	Suppose the statement holds for all $N < k$. Then we have

	\begin{align*}
		T(k) & = r T(\lceil k/n \rceil) + A \lceil k/n \rceil^2                                        \\
		     & \leq r (L \lceil k/n \rceil^{\omega_0} - d \lceil k/n \rceil^2) + A \lceil k/n \rceil^2 \\
		     & \leq r (L ( k/n + 1 )^{\omega_0} - d
		(k/n)^2) + A (k/n+1)^2                                                                         \\
		     & \leq r (L (k/n)^{\omega_0} + c(k/n)^{\omega_0-1} - d (k/n)^2) + A (k/n+1)^2             \\
		     & \leq L k^{\omega_0} + \frac{rc}{n^2}k^2 - \frac{rd}{n^2} k^2 + \frac{4A}{n^2} k^2,
	\end{align*}
	where $c = 2^{\omega_0}-1$ is chosen such that $(x/n+1)^{\omega_0} \leq (x/n)^{\omega_0} + c (x/n)^{\omega_0-1}$ for all $x \geq n$.
	Choosing $d = \frac{rc+4A}{r-n^2}$ completes the induction step.

	It remains to show that for all $N < n$ we have $T(N) \leq L N^{\omega_0} - d N^2$.
	Since $T(N) = 2N^3 - N^2$ for $N < n$, it suffices to show that $L \geq 2N^{3-\omega_0} + (d-1) N^{2-\omega_0}$ for all $1 \leq N < n$, which is the case.
	\qed\end{proof}

In the case of Strassen's original algorithm, which used a restriction $\Tens{2,2,2} \leq_\phi \Tens{7}$ with $A(\phi) = 18$, we find that this theorem gives us a bound on the leading coefficient $L \leq 40$.
Usually the leading coefficient bounds are computed in the idealized case where every recursive call is assumed to be done exactly.
In this case, the leading coefficient of Strassen's algorithm is~$7$, the general formula being $L \leq \frac{A}{r-n^2}+1$~\cite{KARSTADT2017}. While
Theorem~\ref{thm:basicexponent} gave a substantial improvement to the naive approach, all improvements since 1981 to the upper bound on $\omega$ have come from the following stronger theorem due to Sch\"onhage.

\begin{theorem}[Asymptotic Sum Inequality, ASI] \label{thm:ASI}
	If
	\[
		\bigoplus_{i=1}^k \Tens{n_i,m_i,p_i} \leq \Tens{r},
	\]
	and $\omega_0$ is such that
	\[
		\sum_{i=1}^k (n_im_ip_i)^\frac{\omega_0}{3} = r,
	\]
	then for every $\epsilon>0$, there exists an algorithm to multiply $n\times n$ matrices using $\O(n^{\omega_0+\epsilon})$ operations in~$R$.
\end{theorem}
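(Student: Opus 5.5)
The plan is the standard Schönhage argument: take tensor powers of the given restriction, convert the resulting direct sum of many matrix multiplications into a single square one by block padding, and then apply Theorem~\ref{thm:basicexponent}. Write $S=\bigoplus_{i=1}^k \Tens{n_i,m_i,p_i}$. First, Kronecker products of restrictions are restrictions. If $S\le_\phi \Tens{r}$, then $S^{\otimes s}\le_{\phi^{\otimes s}} \Tens{r^s}$. The tensor product of matrix multiplication tensors satisfies $\Tens{n,m,p}\otimes\Tens{n',m',p'}\cong\Tens{nn',mm',pp'}$, and $\otimes$ distributes over $\oplus$. Hence
\[
S^{\otimes s}\cong\bigoplus_{\mu}\binom{s}{\mu}\,\Tens{\textstyle\prod_i n_i^{\mu_i},\prod_i m_i^{\mu_i},\prod_i p_i^{\mu_i}},
\]
where $\mu$ ranges over the compositions of $s$ into $k$ parts. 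This direct sum is again a restriction of $\Tens{r^s}$.

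Next, I would use symmetry and pigeonhole. Cyclically permuting the three factors gives $\Tens{n,m,p}\le\Tens{r}\Rightarrow\Tens{m,p,n}\le\Tens{r}$. Tensoring the three cyclic versions of $S$ reduces everything to square blocks, with $q_i=(n_im_ip_i)^{1/3}$ replacing the triples and $r$ replaced by $r^3$; the exponent condition $\sum_i q_i^{\omega_0}=r$ is preserved under cubing. Among the polynomially many ($\le (s+1)^k$) types $\mu$ in the $s$-th power, choose the one maximizing $\binom{s}{\mu}\prod_i q_i^{\omega_0\mu_i}$. The multinomial expansion of $r^s=(\sum_i q_i^{\omega_0})^s$ then gives $\binom{s}{\mu}Q^{\omega_0}\ge r^s/(s+1)^k$, where $Q=\prod_i q_i^{\mu_i}$. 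So we obtain $M$ disjoint copies of $\Tens{Q,Q,Q}$ with $M:=\binom{s}{\mu}$, all restricting $\Tens{r^s}$.

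The main step, and the main obstacle, is turning ``$M$ copies of $\Tens{Q,Q,Q}$ for $r^s$ multiplications'' into an honest single matrix multiplication. Since $M$ may exceed $r^s/Q^3$, the copies cannot simply be discarded. I would use the recursive trick directly. The inequality $M\odot\Tens{Q,Q,Q}\le\Tens{r^s}$ lets $M$ independent products of $QN\times QN$ matrices be computed with $r^s$ independent products of $N\times N$ matrices. Iterating this $\ell$ times, a product $\Tens{Q^\ell,Q^\ell,Q^\ell}$ can be written in blocks as $Q^{3\ell}/M^{\ell}$ batches of $M^\ell$ products, so one $Q^\ell$-size product costs about $\lceil Q^3/M\rceil^\ell r^{s\ell}$ scalar multiplications. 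More cleanly, I would prove by induction the lemma $M\odot\Tens{Q,Q,Q}\le\Tens{r'}\Rightarrow\Tens{Q^\ell,Q^\ell,Q^\ell}\le\Tens{\lceil r'/M\rceil^{\ell}\cdot r'}$. The induction writes the big product as $Q^3$ block products of size $Q^{\ell-1}$ and groups them $M$ at a time using the base restriction.

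This yields an exponent $3\log_{Q}(\lceil r^s/M\rceil)+o(1)$. The estimate $r^s/M\le(s+1)^kQ^{\omega_0}$ then gives $\log_Q(r^s/M)\le\omega_0+k\log(s+1)/\log Q$. Since $\log Q$ grows linearly in $s$, the error term tends to $0$ as $s\to\infty$, and choosing $s$ large enough gives exponent $\le\omega_0+\epsilon$ via Theorem~\ref{thm:basicexponent}. I expect the bookkeeping in the grouping lemma (the ceilings and the correct power of $3$ after symmetrization) to be the delicate part. The degenerate case where all $q_i=1$ is trivial, because then $\omega_0=\infty$ or the statement is vacuous.
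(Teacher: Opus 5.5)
The paper only quotes this theorem (it is Sch\"onhage's) and gives no proof, so I can only judge your argument on its own terms. Your architecture (tensor power, dominant type, symmetrization, a grouping lemma, then Theorem~\ref{thm:basicexponent}) is the standard one, but the central step does not work as you sketch it.

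The lemma you state, $\Tens{Q^\ell,Q^\ell,Q^\ell}\le\Tens{\lceil r'/M\rceil^{\ell}r'}$, is true. The induction you propose does not prove it. Splitting one $\Tens{Q^\ell,Q^\ell,Q^\ell}$ into its $Q^3$ block products of size $Q^{\ell-1}$ and grouping these $M$ at a time is just the naive block algorithm at the top level. The base restriction concerns $M$ \emph{independent} products of outer size $Q$, and a single big product does not supply them.

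Your heuristic count $\lceil Q^3/M\rceil^\ell r^{s\ell}$ shows the problem. Flattening ranks force $r^s\ge MQ^2$, so that count is at least $Q^{5\ell}$, which is worse than the standard algorithm.

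The missing idea is to carry $M$ copies through the induction and group the $r'$ subproducts created by the base restriction:
\begin{align*}
M\odot\Tens{Q^{\ell},Q^{\ell},Q^{\ell}} &\cong (M\odot\Tens{Q,Q,Q})\otimes\Tens{Q^{\ell-1},Q^{\ell-1},Q^{\ell-1}} \le r'\odot\Tens{Q^{\ell-1},Q^{\ell-1},Q^{\ell-1}}\\
&\le \lceil r'/M\rceil\odot\bigl(M\odot\Tens{Q^{\ell-1},Q^{\ell-1},Q^{\ell-1}}\bigr).
\end{align*}
This gives $M\odot\Tens{Q^{\ell},Q^{\ell},Q^{\ell}}\le\Tens{\lceil r'/M\rceil^{\ell-1}r'}$, and a fortiori your bound. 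Theorem~\ref{thm:basicexponent} then gives the exponent $\frac{\ell-1}{\ell}\log_Q\lceil r'/M\rceil+\frac1\ell\log_Q r'$. There is no factor $3$ in front of $\log_Q$: your formula $3\log_Q(\lceil r^s/M\rceil)$ would give roughly $3\omega_0$, and it contradicts the estimate you use immediately afterwards.

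The symmetrization step is also wrong as described. The product of the three cyclic versions of $S$ consists of the blocks $\Tens{n_im_jp_k,\,m_ip_jn_k,\,p_in_jm_k}$, which are square only when $i=j=k$. Moreover the $q_i$ are generally not integers, so ``$\Tens{Q,Q,Q}$'' is not a tensor. There are two easy fixes:
\begin{itemize}
\item Select the type first, and symmetrize only $\binom{s}{\mu}\odot\Tens{N_\mu,M_\mu,P_\mu}\le\Tens{r^s}$, where $N_\mu=\prod_i n_i^{\mu_i}$ and so on. This yields $\binom{s}{\mu}^3$ square blocks of side $N_\mu M_\mu P_\mu=Q^3$ inside $\Tens{r^{3s}}$.
\item Drop symmetrization entirely. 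The corrected grouping lemma works verbatim for $\Tens{N_\mu,M_\mu,P_\mu}$, and Theorem~\ref{thm:basicexponent} accepts rectangular formats. The exponents then approach $3\log_{N_\mu M_\mu P_\mu}\lceil r^s/\binom{s}{\mu}\rceil=\log_Q\lceil r^s/\binom{s}{\mu}\rceil\le\omega_0+\bigl(k\log(s+1)+\log 2\bigr)/\log Q$.
\end{itemize}

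Two loose ends remain.
\begin{itemize}
\item Your claim that $\log Q$ grows linearly in $s$ needs proof.
\item Your dismissal of the degenerate case is wrong. If all blocks are $\Tens{1,1,1}$ and $k=r$, every real $\omega_0$ satisfies the hypothesis. The statement as written then fails: take $\omega_0=1$, and note that the product has $n^2$ entries to compute. So you must assume some $n_im_ip_i>1$.
\end{itemize}
That assumption also settles the first point. Flattening ranks then give $r\ge k+1$. Combining $\binom{s}{\mu}\le k^s$ with your pigeonhole bound gives $Q^{\omega_0}\ge(r/k)^s/(s+1)^k$. Hence $\log Q\ge\bigl(s\log\frac{r}{k}-k\log(s+1)\bigr)/\omega_0$, which is the linear growth you need.
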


Our result is closely related to the following generalization of the ASI, which was first stated by Romani~\cite{ROMANI1981}.

\begin{theorem}
	If
	\[
		\bigoplus_{i=1}^k \Tens{n_i,m_i,p_i} \leq \bigoplus_{i=1}^{q} \Tens{n_i',m_i',p_i'},
	\]
	and $\omega_0$ is such that
	\[
		\left(\sum_{i=1}^k (n_im_ip_i)^\frac{\omega_0}{3} \right)^3 = \left(\sum_{i=1}^{q} n_i'^{\omega_0-2}m_i'p_i'\right)\left(\sum_{i=1}^{q} n_i'm_i'^{\omega_0-2}p_i'\right)\left(\sum_{i=1}^{q} n_i'm_i'p_i'^{\omega_0-2}\right),
	\]
	then for every $\epsilon>0$, there exists an algorithm to multiply $n\times n$ matrices using $\O(n^{\omega_0+\epsilon})$ operations in~$R$.
\end{theorem}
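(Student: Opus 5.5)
The plan is to reduce to the ASI (Theorem~\ref{thm:ASI}) by symmetrizing both sides and estimating the right-hand side with the true exponent $\omega$. Throughout I assume $\omega>\omega_0$ and aim for a contradiction. Write $S=\bigoplus_{i}\Tens{n_i,m_i,p_i}$ and $S'=\bigoplus_i\Tens{n_i',m_i',p_i'}$. Let $S^{\mathrm{rot}}$ and $S^{\mathrm{rot}^2}$ be the cyclic rotations, obtained by permuting the three tensor factors; they replace $\Tens{n,m,p}$ by $\Tens{m,p,n}$ and by $\Tens{p,n,m}$. Restrictions are preserved by Kronecker products and by simultaneous rotation. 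Hence $S\le S'$ gives
\[
S\otimes S^{\mathrm{rot}}\otimes S^{\mathrm{rot}^2}\;\le\; S'\otimes S'^{\mathrm{rot}}\otimes S'^{\mathrm{rot}^2}.
\]
The left side is a direct sum of $k^3$ matrix multiplication tensors. Their volumes are $(n_im_ip_i)(n_jm_jp_j)(n_lm_lp_l)$, so $\sum (\text{volume})^{\omega_0/3}=\bigl(\sum_i (n_im_ip_i)^{\omega_0/3}\bigr)^3$, which is exactly the left-hand side of the hypothesis.

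The second step bounds the cost of the right side. Its summands are $\Tens{n_i'm_j'p_l',\,m_i'p_j'n_l',\,p_i'n_j'm_l'}$. The key elementary estimate is that for every $\eta>0$ we have $\operatorname{rank}\Tens{a,b,c}\le C_\eta\, a^{\omega+\eta-2}bc$ whenever $a\le b,c$. To see this, tile $\Tens{a,b,c}$ into $\frac{b}{a}\cdot\frac{c}{a}$ copies of $\Tens{a,a,a}$, up to rounding. The hypothesis is written so that the three factors $\sum_i n_i'^{\omega_0-2}m_i'p_i'$, $\sum_i n_i'm_i'^{\omega_0-2}p_i'$ and $\sum_i n_i'm_i'p_i'^{\omega_0-2}$ each charge the ``$\omega-2$'' power to a different coordinate. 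Their product therefore corresponds to computing each summand $\Tens{n_i,m_i,p_i}\otimes\Tens{m_j,p_j,n_j}\otimes\Tens{p_l,n_l,m_l}$ factor by factor. The $i$-th factor is paid for with weight $n^{\omega-2}mp$, the $j$-th with $m^{\omega-2}\cdots$, and so on.

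To make this rigorous I would not use the rank bound per factor, which fails when the first coordinate is not the smallest. Instead I would take a high power $N$ of the whole symmetrized restriction. I would then decompose by type classes of the multinomial expansion, i.e.\ the multiset of indices used, and restrict the left side to one dominant type class. Each piece on the right is then a single large rectangular product, namely a sum of many copies of one $\Tens{a,b,c}$. Its rank is bounded by $(a b c)^{\omega/3}$-type estimates combined with the rotated partners, and the AM--GM/Hölder inequality recovers the product of the three sums.

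The final step feeds this into Theorem~\ref{thm:ASI}. The power-$N$ left side is a direct sum of matrix products restricted to a tensor whose rank is at most $\mathrm{poly}(N)\cdot\bigl(\prod_{\text{three sums}}\bigr)^{N}$ evaluated at $\omega+\eta$. Applying the ASI and letting $N\to\infty$ and $\eta\to 0$ then yields $\omega\le\omega_0$, provided the function
\[
\omega\;\mapsto\;\frac{(\text{three-sum product at }\omega)}{\bigl(\sum_i(n_im_ip_i)^{\omega/3}\bigr)^3}
\]
is monotone. I expect the main obstacle to be the second step, specifically justifying that the rank of $\Tens{a,b,c}$ with arbitrary shape is controlled by the mixed quantity $n^{\omega-2}mp$ rather than $(nmp)^{\omega/3}$. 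My first attempt would be to symmetrize there as well, computing $\Tens{a,b,c}\otimes\Tens{b,c,a}\otimes\Tens{c,a,b}$ jointly. Then each coordinate is a product $abc$ and square-block tiling applies cleanly. The $\omega-2$ exponents then land exactly where the three sums in the hypothesis place them.
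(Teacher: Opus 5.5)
The paper gives no proof of this theorem; it is quoted from Romani. So your argument has to stand on its own. Its architecture is the right one: symmetrize, take $N$-th powers, bound the rank of the right-hand side, and apply Theorem~\ref{thm:ASI}.

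\textbf{First gap: bounding the right-hand side.} The second step is exactly where you located the obstacle, and your remedies do not close it as written.
Restricting the left side to one type class does not split the right side, because the restriction still lands in the whole right-hand direct sum.
Copies of an unbalanced $\Tens{a,b,c}$ do not cost $(abc)^{\omega/3+o(1)}$ each. Already $\Tens{a,1,1}$ has rank $a$, and flattening ranks show this persists in direct sums, with or without the rotated partners. So AM--GM or H\"older applied to such a bound proves nothing.
Finally, $\Tens{a,b,c}$, $\Tens{b,c,a}$, $\Tens{c,a,b}$ occur on the right as separate direct summands, not as Kronecker factors. They therefore cannot be computed ``jointly'' as $\Tens{abc,abc,abc}$.

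The missing idea is simpler. Write $F(\tau)$ for the product of the three sums and $G(\tau)=\bigl(\sum_i (n_im_ip_i)^{\tau/3}\bigr)^3$.
\begin{itemize}
\item The multiset of blocks $\Tens{X,Y,Z}$ of $(S'\otimes S'^{\mathrm{rot}}\otimes S'^{\mathrm{rot}^2})^{\otimes N}$ is invariant under $\Tens{X,Y,Z}\mapsto\Tens{Y,Z,X}$. For $N=1$ this is the reindexing $(i,j,l)\mapsto(l,i,j)$.
\item Tiling along the smallest coordinate gives $\operatorname{rank}\Tens{X,Y,Z}\le 4C_\eta\,(X^{\tau-2}YZ+Y^{\tau-2}ZX+Z^{\tau-2}XY)$ with $\tau=\omega+\eta$, since the tiling cost is one of the three summands.
\item Summing over all blocks, rotation invariance makes each of the three resulting sums equal to $\sum_B X_B^{\tau-2}Y_BZ_B=F(\tau)^N$.
\end{itemize}
Hence the rank is at most $12C_\eta F(\omega+\eta)^N$, and the ASI gives $G(\omega)^N\le 12C_\eta F(\omega+\eta)^N$. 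Letting $N\to\infty$ and then $\eta\to0$ yields $G(\omega)\le F(\omega)$. No type classes or H\"older are needed; your ``rotated partners'' instinct was right, but it must be paired with the smallest-coordinate charge.

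\textbf{Second gap: the monotonicity proviso.} This one cannot be discharged, because the statement as written is false without an extra hypothesis.
From $G(\omega)\le F(\omega)$ you get $\omega\le\omega_0$ only if $F(\tau)<G(\tau)$ for all $\tau\in(\omega_0,3]$. That is, $F/G$ must decrease through $1$; monotonicity in the other direction gives the opposite, useless inequality.
Two examples show this fails in general:
\begin{itemize}
\item For the trivial restriction $\Tens{n,n,n}\le\Tens{n,n,n}$ the defining equation holds for every $\omega_0$.
\item For $\Tens{1,1,1}\le\Tens{2,2,2}$ it forces $\omega_0=0$.
\end{itemize}
You must state a hypothesis explicitly. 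One that works: a single $\Tens{n,m,p}$ on the left, with $n_j'\le n$, $m_j'\le m$, $p_j'\le p$, not all equalities. Then $F/G$ is a product of the non-increasing factors $\sum_j (n_j'/n)^{\tau-2}m_j'p_j'/(mp)$ and its two analogues, at least one of them strictly decreasing.

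That is essentially the setting in which the paper recovers Romani's exponent as $\omega_{\mathrm{sym}}$, via Theorem~\ref{thm:GeneralASI} and Proposition~\ref{prop:symmetric_assumptions}. The paper does it constructively: it analyzes Algorithm~\ref{alg:MatrixMults} with the potential $NMP(N^{\omega_1-3}+M^{\omega_2-3}+P^{\omega_3-3})$. That potential is the same ``sum of the three charges'' device as above, and it yields an explicit algorithm with a leading coefficient. Your ASI-based route, once repaired, also handles a direct sum on the left, but it is purely existential.
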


While the ASI has proved very useful for finding ever tighter bounds on the exponent of matrix multiplication, it has, to our knowledge, never been used for finding fast matrix multiplication algorithms outside the purely asymptotic regime.
The theorem is frequently used in the context of approximate algorithms, requiring either an unreasonable level of precision for most applications or an extremely large leading coefficient, as well as very large base cases, rendering them functionally useless in the context of practical algorithms.
In this work we do not aim to improve the exponent of matrix multiplication in general, but to find algorithms which could be used in practical computations, while achieving a better complexity.
This does not mean that the algorithms presented here are practical in the sense of being faster than currently used algorithms for matrix sizes that appear in practical computations, but at least they could reasonably be implemented and used for matrix sizes that fit into the memory of a large computer.
We will use direct sums and Kronecker products of tensors in the following part.
The important property is that they are compatible with the notion of restriction, so we have
\[
	T_1 \leq T_1', T_2 \leq T_2' \implies T_1 \oplus T_2 \leq T_1' \oplus T_2',\quad T_1 \otimes T_2 \leq T_1' \otimes T_2'.
\]
Readers less familiar with these concepts can view these operations as follows: The direct sum $\oplus$ corresponds to computing two disjoint matrix multiplications simultaneously, while the Kronecker product $\otimes$ corresponds to the inner and outer part of a block matrix multiplication.
In particular, we have that
\[
	\Tens{n_1,m_1,p_1} \otimes \Tens{n_2,m_2,p_2} \cong \Tens{n_1 n_2, m_1 m_2, p_1 p_2}.
\]
Since we will frequently encounter direct sums  and Kronecker products of matrix multiplication tensors of the same format, we use the notation
\[
	k \odot \Tens{n,m,p} = \bigoplus_{i=1}^k \Tens{n,m,p}, \quad \Tens{n,m,p}^{\otimes k} = \bigotimes_{i=1}^k \Tens{n,m,p}.
\]

We will also use the well-known fact that one can permute the dimensions of matrix multiplication tensors.
Specifically, we get from $(\A\B)^T = \A^T\B^T$ that we transform any algorithm from a restriction of $\Tens{n,m,p}$ into one for $\Tens{p,m,n}$ and from the cyclic symmetry of the matrix multiplication tensor it follows that we can transform it to $\Tens{m,p,n}$ as well.
\section{Divide less, conquer more}\label{sec:newstuff}

While Sch\"onhage's ASI is an existence result, we present an explicit algorithm with the asymptotic complexity given by the generalized ASI.  In contrast to the classical ASI we do not allow a direct sum of matrix multiplication tensors on the left side, but only on the right side.  This means that we decompose a matrix multiplication tensor into a sum of smaller matrix multiplications, not just rank-one tensors.  This allows to divide a matrix multiplication into larger subblocks, thereby performing more recursive steps (and hence more savings compared to the standard algorithm).

The concept
of achieving additional savings this way originates from Sch\"onhage's 1981 paper~\cite{SCHONHAGE1981}, in which it is shown to find improvements to Pan's use of the ASI~\cite{PAN1984} by taking tensor powers and then applying Strassen's algorithm.
Let us illustrate the idea using the decomposition of $\Tens{6,6,6}$ found by Moosbauer and Poole~\cite{MOOSBAUER2025}.
They show that $\Tens{6,6,6}$ can be decomposed into a sum of $153$ rank-one tensors, analyzing the structure of this decomposition, we can find the restriction
$\Tens{6,6,6} \leq 137 \odot \Tens{1} \oplus 8 \odot \Tens{1,1,2}$.
Applying a cyclic permuation $\sigma$ to this restriction gives $\Tens{6,6,6} \leq 137 \odot \Tens{1} \oplus 8 \odot \Tens{2,1,1}$, and $\Tens{6,6,6} \leq 137 \odot \Tens{1} \oplus 8 \odot \Tens{1,2,1}$.
If we then consider the Kronecker product of these three restrictions we get
\begin{align*}
	\Tens{6,6,6}^{\otimes3} & \leq    137^3 \odot \Tens{1} \oplus 8 \cdot 137^2 \odot (\Tens{1,1,2} \oplus \Tens{2,1,1} \oplus \Tens{1,2,1})              \\
	                        & \kern-1pt \oplus  8^2 \cdot 137 \odot (\Tens{1,2,2} \oplus \Tens{2,1,2} \oplus \Tens{2,2,1}) \oplus 8^3 \odot \Tens{2,2,2}.
\end{align*}
While we have to compute $\Tens{1,1,2},\Tens{1,2,2}$ and their cyclic permutations by the standard algorithm, using $2$ and $4$ multiplications respectively, we can compute $\Tens{2,2,2}$ using Strassen's algorithm with $7$ multiplications.
In the language of tensor restrictions we use $\Tens{1,1,2} \leq \Tens{2}$, $\Tens{1,2,2} \leq \Tens{4}$ and $\Tens{2,2,2} \leq \Tens{7}$, to conclude that $\Tens{216,216,216} \leq \Tens{3581065}$.
Applying Theorem \ref{thm:basicexponent} to this restriction gives an algorithm with exponent $\omega_0 = 2.80751$, which is only slightly smaller than the original exponent $2.80754$.
We can do even better by taking higher tensor powers.

But this is not all.
The key observation is that instead of using Strassen's decomposition to save additional multiplications, we can use any fast matrix multiplication algorithm to further generalize the matrix multiplication tensors we restricted from, in particular we can use the same strategy recursively.
In this example we would get an algorithm with exponent $2.805065$.
We will show that we can use this process in a recursive matrix multiplication algorithm that is very similar to the standard Strassen-like algorithms obtained from tensor decompositions that has the complexity as given by the generalized ASI while retaining a reasonable leading coefficient.
In Algorithm \ref{alg:MatrixMults} we write $\mathcal{L}_{1}^{(i,j)},\mathcal{L}_{2}^{(i,j)}, \mathcal{L}_3^{(k,l)}$ for the functions that compute the linear combinations of sub-blocks as given by the tensor restriction.
Unlike in Strassen's algorithm, some of these block matrices are recombined into larger matrices of size $n_i \times m_i$ and $m_i \times p_i$ respectively.
\begin{algorithm}
	\caption{Matrix Mult for a given restriction $\Tens{n,m,p} \leq \bigoplus_{i=1}^q s_i\odot \Tens{n_i,m_i,p_i}$ and a given threshold $N_0$}
	\label{alg:MatrixMults}
	\begin{algorithmic}[1]
		\Require An $N \times M$ matrix $\A$ and an $M \times P$ matrix $\B$.
		\Ensure The matrix product $\C = \A\B$
		\Function{MatrixMults}{$\mathbf{A},\mathbf{B}$}
		\If{$\min(N,M,P) \leq N_0$}
		\State \Return $\A\B$ computed by the standard algorithm
		\EndIf
		\If{$n\nmid N \lor m\nmid M \lor p\nmid P$}
		\State $N \gets \lceil N/n \rceil n$
		\State $M \gets \lceil M/m \rceil m$
		\State $P \gets \lceil P/p \rceil p$
		\State pad $\A$ and $\B$ with zeros to dimensions $N\times M$ and $M\times P$
		\EndIf
		\For{$i\gets 1$ to $q$}
		\For{$j \gets 1$ to $s_i$}
		\State $X \gets \mathcal{L}_{1}^{(i,j)}(\A)$
		\State $Y \gets \mathcal{L}_{2}^{(i,j)}(\B)$
		\State $Z_{i,j} \gets \Call{MatrixMult}{X,Y,Nn_i/n,Nm_i/m,Pp_i/p}$
		\EndFor
		\EndFor
		\For{$k \gets 1$ to $N/n$}
		\For{$l \gets 1$ to $P/p$}
		\State $\C_{k,l} \gets \mathcal{L}_3^{(k,l)}(\mathbf{Z})$
		\EndFor
		\EndFor
		\State \Return $\C$
		\EndFunction
	\end{algorithmic}
\end{algorithm}

Algorithm \ref{alg:MatrixMults} correctly computes the matrix product $\mathbf{C}$, since it computes all the same matrix products as the standard
Strassen-like algorithm obtained from the tensor decomposition. However, our algorithm  computes some of them simultaneously as a larger matrix.
If in Algorithm~\ref{alg:MatrixMults} some of the recursive calls are rectangular matrix multiplications (as will be the case in all our results), then for some recursive paths the number of recursions is limited to $\min(\log_{\frac{n_i}{n}}(N), \log_{\frac{m_i}{m}}(M), \log_{\frac{p_i}{p}}(P))$, before we encounter a subproblem of shape $\Tens{N',M',1}$, or a permutation thereof, which forces us to switch to the standard algorithm.
However, we do not need to use the same restriction in each recursive call.
Similar to the example discussed above we can alternate between three cyclic permutations of a restriction, to ensure that most recursive paths can continue for longer.
\section{Complexity Analysis}


We will now analyse the complexity of Algorithm \ref{alg:MatrixMults}, to confirm that it gives a different exponent than we would get from a straightforward Strassen-like algorithm.
Throughout this section, we will let $\omega_1, \omega_2, \omega_3$ be defined as the unique positive real numbers such that
\[
	n^{\omega_1-2}mp = \sum_{i=1}^q s_i n_i^{\omega_1-2}m_ip_i,
	\qquad nm^{\omega_2-2}p = \sum_{i=1}^q s_i n_im_i^{\omega_2-2}p_i,
	\qquad nmp^{\omega_3-2} = \sum_{i=1}^q s_i n_im_ip_i^{\omega_3-2}
\]
for a given tensor restriction $\Tens{n,m,p} \leq \bigoplus_{i=1}^q s_i \odot \Tens{n_i,m_i,p_i}$, and let $\omega_{\max}$ denote their maximum.
\begin{theorem}\label{thm:GeneralASI}
	Let $\rho = \min(\sum_{i=1}^q s_i \frac{n_im_i}{nm},\sum_{i=1}^q s_i
		\frac{m_ip_i}{mp},\sum_{i=1}^q s_i \frac{p_in_i}{pn})$,
	let $N_0\geq\lceil\left(\frac{\rho-1}{42\max(n,m,p)}\right)^\frac{1}{\omega_{\max}-3} \rceil$,
	and suppose
	that $\rho>1$,
	that $n_i < n$, $m_i < m$, $p_i < p$ for all~$i$,
	that $N_0> \frac{n-1}{\frac n{n_i}-1}$,
	$N_0> \frac{m-1}{\frac m{m_i}-1}$,
	$N_0> \frac{p-1}{\frac p{p_i}-1}$ for all~$i$,
	and that $\sum_{i=1}^q s_in_im_ip_i<nmp$.
	Then Algorithm \ref{alg:MatrixMults} takes
	\[
		\O(NMP(N^{\omega_1-3} + M^{\omega_2-3} + P^{\omega_3-3}))
	\]
	arithmetic operations.
\end{theorem}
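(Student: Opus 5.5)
We prove the bound by the same strong induction as in Theorem~\ref{thm:basiccost}, now over the triple $(N,M,P)$. Let $T(N,M,P)$ be the cost of Algorithm~\ref{alg:MatrixMults} on an $N\times M$ by $M\times P$ product. The claim is that there are constants $L$ and $d$ (depending only on the restriction and $N_0$) with
\[
T(N,M,P)\le L\,NMP\bigl(N^{\omega_1-3}+M^{\omega_2-3}+P^{\omega_3-3}\bigr)-d\,(NM+MP+PN)
\]
for all $N,M,P$. The linear-combination step costs at most $A\,(\lceil N/n\rceil\lceil M/m\rceil+\lceil M/m\rceil\lceil P/p\rceil+\lceil P/p\rceil\lceil N/n\rceil)$, with $A$ fixed by the restriction. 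After padding, the recursion is
\[
T(N,M,P)\le\sum_i s_i\,T\bigl(\lceil N/n\rceil n_i,\lceil M/m\rceil m_i,\lceil P/p\rceil p_i\bigr)+A\,(\text{quadratic terms}).
\]

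\textbf{Base case.} If $\min(N,M,P)\le N_0$, the standard algorithm costs at most $2NMP$. Say $N\le N_0$. Since $\omega_1<3$, we have $NMP\,N^{\omega_1-3}\ge N_0^{\omega_1-3}NMP$. Choosing $L\ge 2N_0^{3-\omega_{\max}}$ plus a term absorbing $d(NM+MP+PN)\le 3d\,NMP$ handles this case. This is where the lower bound on $N_0$ is used: it ensures the $d$ subtraction is dominated.

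\textbf{Inductive step.} First note that the hypotheses $N_0>\frac{n-1}{n/n_i-1}$ etc.\ guarantee $\lceil N/n\rceil n_i<N$ whenever $N>N_0$. Indeed $\lceil N/n\rceil n_i\le (N+n-1)n_i/n<N$ is equivalent to $N(1-n_i/n)>(n-1)n_i/n$, which is exactly that hypothesis. So the arguments strictly decrease and the induction is well founded.

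Next, substitute the hypothesis into the recursion and treat the three summands separately. For the first summand, $\sum_i s_i (N'n_i)(M'm_i)(P'p_i)(N'n_i)^{\omega_1-3}$ with $N'=\lceil N/n\rceil$ equals $N'^{\omega_1}M'P'\sum_i s_in_i^{\omega_1-2}m_ip_i=N'^{\omega_1}M'P'\,n^{\omega_1-2}mp$ by the definition of $\omega_1$. Replacing $N'$ by $N/n+1$, this is $NMP\,N^{\omega_1-3}$ plus error terms of order $N^{\omega_1-1}MP$, $N^{\omega_1}P$, and so on. The $M$ and $P$ summands are handled in the same way using $\omega_2$ and $\omega_3$.

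The subtracted terms produce $-d\sum_i s_i(N'M'n_im_i+\dots)$. Since $\sum_i s_in_im_i/(nm)\ge\rho>1$, this is at most $-d\rho(NM+MP+PN)+O(\cdot)$. The surplus $d(\rho-1)(NM+MP+PN)$ must then absorb both the padding errors and the $A$-terms.

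\textbf{Main obstacle.} The padding errors are not of quadratic order: for instance $N^{\omega_1-1}MP$ can exceed $NM$ when $P$ is large. The first thing to try is to note that $N^{\omega_1-1}MP=NMP\,N^{\omega_1-3}\cdot N$ is a relative error of size $O(1/N)$ times the main term. This can be absorbed by replacing $N$ by a shifted quantity such as $(N+c)$ in the ansatz, or by using a multiplicative slack $(1+c/N)$ on the main term that telescopes because the recursion shrinks $N$ geometrically. Alternatively one can show directly that the $L$-coefficient after one step is multiplied by $\prod(1+O(1/N_k))$ over the recursion depth, and this product converges.

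The hypothesis $\sum_i s_in_im_ip_i<nmp$ guarantees that the standard-algorithm leaves, of total size about $N_0\cdot(\text{stuff})$, sum to $O(NMP)$ with geometric decay. Together with the bound on $N_0$, which involves $\rho-1$ and the constant $42\max(n,m,p)$, this fixes consistent values of $L$ and $d$ and closes the induction.
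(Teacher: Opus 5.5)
There is a genuine gap: the inductive step is never closed, and the ``main obstacle'' you identify comes from an exponent slip. Substituting the hypothesis gives
\[
\sum_i s_i (N'n_i)(M'm_i)(P'p_i)(N'n_i)^{\omega_1-3}=N'^{\,\omega_1-2}M'P'\,n^{\omega_1-2}mp,
\]
not $N'^{\,\omega_1}M'P'\,n^{\omega_1-2}mp$. Using $N'\le N/n+1$ etc., this is at most $(N+n)^{\omega_1-2}(M+m)(P+p)$. Since $0\le\omega_1-2<1$, you have $(1+n/N)^{\omega_1-2}\le 1+n/N$. So the excess over $N^{\omega_1-2}MP$ consists of terms like $mN^{\omega_1-2}P=mN^{\omega_1-3}\cdot NP$ and $nN^{\omega_1-3}MP$. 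Altogether it is at most $7N^{\omega_1-3}\max(n,m,p)(NM+MP+PN)$, which is the paper's Lemma~\ref{lemma:1}.

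The padding error therefore \emph{is} quadratic, with a coefficient that decays like $N_0^{\omega_{\max}-3}$. No shifted ansatz or telescoping product is needed. Since you only sketch those remedies, your argument as written does not close. The missing step is this: for $N,M,P>N_0$ the total padding error is at most $21L\max(n,m,p)N_0^{\omega_{\max}-3}(NM+MP+PN)$. The hypothesis on $N_0$ says exactly that $21\max(n,m,p)N_0^{\omega_{\max}-3}\le\frac{\rho-1}{2}$, so the $d$-part of this error is absorbed by half of the surplus $d(\rho-1)$. This is where the lower bound on $N_0$ is used: in the inductive step, not in the base case.

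Two further points need repair.

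First, both this estimate and the base case require $2\le\omega_i<3$, which you simply assume. The paper proves it by observing that $f(x)=\sum_i s_i(n_i/n)^{x-2}m_ip_i/(mp)$ is non-increasing, with $f(2)\ge\rho>1$ and $f(3)<1$. That is the actual role of the hypothesis $\sum_i s_in_im_ip_i<nmp$. Your leaf-counting use of it would only give an $O(NMP)$ bound, which is weaker than the claim.

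Second, your base case is too lossy. With $L\approx(2+3d)N_0^{3-\omega_{\min}}$ (note that the case $\min(N,M,P)\le N_0$ needs $\omega_{\min}$ here, not $\omega_{\max}$), the padding error contributes a coefficient of order $d\max(n,m,p)$ on $NM+MP+PN$. The surplus $d(\rho-1)$ cannot cover this in general. The paper instead uses $2NMP\le 2N_0(NM+MP+PN)$ together with $NM+MP+PN\le NMP(N^{\omega_1-3}+M^{\omega_2-3}+P^{\omega_3-3})$, valid because $\omega_i\ge 2$. Then $L=2N_0+d$ suffices. With $d=\frac{2A+84N_0^{\omega_{\max}-2}\max(n,m,p)}{\rho-1}$ one gets $A-d\rho+21L\max(n,m,p)N_0^{\omega_{\max}-3}\le -d$, which closes the induction.
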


Note that in the case each of the three factors in Romani's generalized ASI are identical and $n=m=p$ we exactly recover Romani's exponent.
Moreover, note that these requirements on the restriction are not very strong, and we conjecture that any non-trivial restriction with $\sum_{i=1}^q s_i n_im_ip_i < nmp$ will have these properties. In fact, we will later see a case in which we prove we can drop most of these conditions.
\begin{lemma}\label{lemma:1}
	Let $X\geq x,Y\geq y,Z\geq z \in \N$ and let $2 \leq \omega < 3 \in \R$.
	Then we have
	\begin{align*}
		\lceil \frac {X}{x} \rceil^{\omega-2} \lceil \frac{Y}{y} \rceil \lceil \frac{Z}{z} \rceil x^{\omega-2}yz \leq & X^{\omega-2}YZ + 7X^{\omega-3} \max(x,y,z)(XY+YZ+ZX).
	\end{align*}
\end{lemma}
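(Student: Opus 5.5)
The plan is to remove the ceilings with the trivial bound $\lceil X/x\rceil \le X/x+1$, expand the resulting product, and bound each error term by one of the three terms $X^{\omega-3}\max(x,y,z)\,XY$, $X^{\omega-3}\max(x,y,z)\,YZ$, $X^{\omega-3}\max(x,y,z)\,ZX$. Write $u=\max(x,y,z)$.

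\emph{Step 1: remove the ceilings.} Since $x\lceil X/x\rceil \le X+x$, and similarly for $y$ and $z$, and since $\omega-2\ge 0$, the left-hand side is at most
\[
	(X+x)^{\omega-2}(Y+y)(Z+z).
\]

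\emph{Step 2: linearise the fractional power.} Because $0\le \omega-2<1$ and $x/X\ge 0$, Bernoulli's inequality in its concave form gives $(1+x/X)^{\omega-2}\le 1+x/X$. Hence the left-hand side is at most
\[
	X^{\omega-2}\Bigl(1+\frac{x}{X}\Bigr)(YZ+yZ+zY+yz).
\]
This is the only genuinely non-polynomial step. It is where the hypothesis $\omega<3$ (more precisely $\omega\le 3$) is used.

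\emph{Step 3: expand and bound term by term.} Factor out $X^{\omega-3}$, so that the remaining factor is $(X+x)(YZ+yZ+zY+yz)$. The term $X\cdot YZ$ gives the main term $X^{\omega-2}YZ$. The other seven products are bounded using $x\le X$, $y\le Y$, $z\le Z$ and $x,y,z\le u$:
\begin{align*}
	xYZ &\le u\,YZ, & XyZ &\le u\,XZ, & XzY &\le u\,XY, & Xyz &\le u\,XZ,\\
	xyZ &\le u\,XZ, & xzY &\le u\,XY, & xyz &\le u\,XZ. & &
\end{align*}
Collecting terms, the right-hand sides contribute at most $u\,YZ+4u\,ZX+2u\,XY$. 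This is bounded by $4u(XY+YZ+ZX)$, which is in turn at most $7u(XY+YZ+ZX)$, and this is the claimed inequality.

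I expect no real obstacle. The only care needed is in Step 2, where the concavity of $t\mapsto t^{\omega-2}$ must be used correctly, and in checking that each cross term uses at most one factor $u$ and otherwise only the hypotheses $x\le X$, $y\le Y$, $z\le Z$. The constant $7$ leaves ample slack; presumably it was chosen to count all seven non-leading monomials crudely.
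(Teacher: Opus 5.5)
Your proof is correct and follows essentially the same route as the paper: bound each ceiling via $x\lceil X/x\rceil\le X+x$, linearise with $(1+x/X)^{\omega-2}\le 1+x/X$, expand, and absorb the seven cross terms into $7X^{\omega-3}\max(x,y,z)(XY+YZ+ZX)$. Your explicit term-by-term accounting also shows that the constant $7$ could be lowered to $4$.
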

\begin{proof}
	\begin{align*}
		 & \lceil \frac {X}{x} \rceil^{\omega-2} \lceil \frac{Y}{y} \rceil \lceil \frac{Z}{z} \rceil x^{\omega-2}yz \\
		 & \leq (X/x+1)^{\omega-2}(Y/y+1)(Z/z+1)x^{\omega-2}yz                                                      \\
		 & = X^{\omega-2}(1+x/X)^{\omega-2}(Y+y)(Z+z)                                                               \\
		 & \leq X^{\omega-2}(1+x/X)(Y+y)(Z+z)                                                                       \\
		 & \leq (X^{\omega-2}+xX^{\omega-3})(Y+y)(Z+z)                                                              \\
		 & \leq X^{\omega-2}YZ + X^{\omega-2}(yZ+Yz+yz) + xX^{\omega-3}(Y+y)(Z+z)                                   \\
		 & \leq X^{\omega -2}YZ + 7 X^{\omega-3} \max(x,y,z)(XY+YZ+ZX)
	\end{align*}
	\qed\end{proof}

\begin{proof}[Proof of Theorem \ref{thm:GeneralASI}]
	First, we show that $2 < \omega_i < 3$  for $i=1,2,3$. Consider the function $f(x) = \sum_{i=1}^q s_i \left( \frac{n_i}{n} \right) ^ {x-2} \frac{m_ip_i}{mp}$.
	Note that this function is non-increasing since its derivative is non-positive.
	Since $\rho > 1$ we know $f(2) > 1$, and since $\sum_{i=1}^q s_i n_i m_i p_i < nmp$ we know $f(3) < 1$.
	By definition, $f(\omega_1) = 1$ and hence $2 < \omega_1 < 3$. The same argument applies for $\omega_2$ and $\omega_3$.

	Let $T(N,M,P)$ denote the number of operations used by Algorithm \ref{alg:MatrixMults} to compute the matrix product of an $N\times M$ matrix with an $M \times P$ matrix.
	Note that if any of $N,M,P$ are at most $N_0$, we use the standard algorithm and so we have $T(N,M,P) \leq 2NMP$, and otherwise we have the recurrence
	\begin{align*}
		T(N,M,P) \leq {} & A \left( NM + MP + PN \right) +
		\sum_{i=1}^q s_i T(\lceil \frac{N}{n} \rceil n_i,\lceil \frac{M}{m} \rceil m_i, \lceil \frac{P}{p} \rceil p_i)
	\end{align*}

	Let $d=\frac{2A+84N_0^{\omega_{\max}-2}\max(n,m,p)}{\rho-1}$ and $L = 2N_0+d$.
	We show by induction on $V$ that for all positive integers $N,M,P$ with $NMP \leq V$ we have
	\[
		T(N,M,P) \leq LNMP(N^{\omega_1-3} + M^{\omega_2-3} + P^{\omega_3-3}) - d(NM+MP+PN).
	\]

	For the base case, we take $V = N_0^3$.
	We know that if $NMP \leq N_0^3$ then we have $\min(N,M,P) \leq N_0$ and so
	\begin{align*}
		T(N,M,P) & \leq 2NMP                                                             \\
		         & \leq 2N_0(NM+MP+PN)                                                   \\
		         & \leq (L-d)(NM+MP+PN)                                                  \\
		         & \leq LNMP(N^{\omega_1-3}+M^{\omega_2-3}+P^{\omega_3-3}) - d(NM+MP+PN)
	\end{align*}
	and so the statement holds for the base cases.
	Suppose we know for all positive integers $N,M,P$ with $NMP < V$ we have $T(N,M,P) \leq LNMP(N^{\omega_1-3} + M^{\omega_2-3} + P^{\omega_3-3}) - d(NM+MP+PN)$.
	Then we will show that for all positive integers $N,M,P$ with $NMP = V$ that we still have this inequality.
	First, consider that $\min(N,M,P) \leq N_0$. In this case, we have
	\begin{align*}
		T(N,M,P) & \leq 2NMP                                                              \\
		         & \leq 2N_0(NM+MP+PN)                                                    \\
		         & \leq (L-d)(NM+MP+PN)                                                   \\
		         & \leq LNMP(N^{\omega_1-3}+M^{\omega_2-3}+P^{\omega_3-3}) - d(NM+MP+PN).
	\end{align*}
	Otherwise, we know $\min(N,M,P) > N_0$.
	In this case, we can apply the recurrence formula, and the assumptions on $N_0$ ensure that we can apply the induction hypothesis to all recursive calls, which gives us
	\begin{alignat*}{3}
		T(N,M,P) & \leq A(NM+MP+PN) + \sum_{i=1}^q s_i T(\lceil \frac{N}{n} \rceil n_i,\lceil \frac{M}{m} \rceil m_i, \lceil \frac{P}{p} \rceil p_i)                                                                          \\
		         & \leq A(NM+MP+PN)                                                                                                                                                                                           \\
		         & \quad + L \lceil \frac{N}{n}\rceil ^{\omega_1-2}\lceil \frac{M}{m}\rceil \lceil \frac{P}{p}\rceil \sum_{i=1}^q s_i n_i^{\omega_1-2} m_i p_i                                                                \\
		         & \quad + L \lceil \frac{N}{n}\rceil \lceil \frac{M}{m}\rceil^{\omega_2-2} \lceil \frac{P}{p}\rceil \sum_{i=1}^q s_i n_i m_i^{\omega_2-2} p_i                                                                \\
		         & \quad + L \lceil \frac{N}{n}\rceil \lceil \frac{M}{m}\rceil \lceil \frac{P}{p}\rceil^{\omega_3-2} \sum_{i=1}^q s_i n_i m_i p_i^{\omega_3-2}                                                                \\
		         & \quad -d \lceil\frac{N}{n}\rceil \lceil \frac{M}{m}\rceil \sum_{i=1}^q s_i n_i m_i                                                                                                                         \\
		         & \quad -d \lceil\frac{M}{m}\rceil \lceil \frac{P}{p}\rceil \sum_{i=1}^q s_i m_i p_i                                                                                                                         \\
		         & \quad -d \lceil \frac{P}{p}\rceil \lceil \frac{N}{n}\rceil \sum_{i=1}^q s_i p_i n_i                                                         &       & \text{by induction hypothesis}                       \\
		         & \leq (A-d\rho)(NM+MP+PN)                                                                                                                                                                                   \\
		         & \quad + L \lceil \frac{N}{n}\rceil ^{\omega_1-2}\lceil \frac{M}{m}\rceil \lceil \frac{P}{p}\rceil n^{\omega_1-2}mp                                                                                         \\
		         & \quad + L \lceil \frac{N}{n}\rceil \lceil \frac{M}{m}\rceil^{\omega_2-2} \lceil \frac{P}{p}\rceil nm^{\omega_2-2}p                                                                                         \\
		         & \quad + L \lceil \frac{N}{n}\rceil \lceil \frac{M}{m}\rceil \lceil \frac{P}{p}\rceil^{\omega_3-2} nmp^{\omega_3-2}                          &       & \text{by definition of $\omega_1,\omega_2,\omega_3$} \\
		         & \leq \left(A-d\rho +7L\max(n,m,p)\left( N^{\omega_1-3} + M^{\omega_2-3} + P^{\omega_3-3} \right)\right)                                                                                                    \\
		         & \quad + L \left( N^{\omega_1-2}MP + NM^{\omega_2-2}P + NMP^{\omega_3-2}\right)                                                              &       & \text{by Lemma~\ref{lemma:1}}                        \\
		         & \leq \left( A-d\rho + 21L\max(n,m,p) N_0^{\omega_{\max}-3} \right)(NM+MP+PN)                                                                                                                               \\
		         & \quad + LNMP\left( N^{\omega_1-3} + M^{\omega_2-3} + P^{\omega_3-3} \right)                                                                 &       & \text{by $N,M,P\geq N_0$}                            \\
		         & \leq L NMP\left( N^{\omega_1-3} + M^{\omega_2-3} + P^{\omega_3-3} \right) - d(NM+MP+PN)                                                     & \quad & \text{by the choice of $d$}                          \\
	\end{alignat*}
	Hence, by induction, we are done.
	\qed\end{proof}

We provide a Mathematica Script that contains a step by step derivation and verification of the inequality chains in the proof of Theorem \ref{thm:GeneralASI} for specific values of the $\omega_i$, which can be found at
\begin{center}
	\url{https://github.com/mkauers/matrix-multiplication/structured}
\end{center}

In an earlier version of this manuscript we claimed that a variant of Algorithm~\ref{alg:MatrixMults} achieves an exponent $\omega_0$ with
\[
	(nmp)^{\frac{\omega_0}{3}} = \sum_{i=1}^q s_i (n_im_ip_i)^{\frac{\omega_0}{3}}.
\]
It has been pointed out to the authors by Alman and Li~\cite{LI2026} that the proof of this claim was flawed.
For Algorithm~\ref{alg:MatrixMults} as it is stated here we can show that the complexity stated in Theorem~\ref{thm:GeneralASI} is not only an upper bound but also a lower bound, so we have that Algorithm~\ref{alg:MatrixMults}
takes $\Theta(N^{\omega_1-2}MP + NM^{\omega_2-2}P + NMP^{\omega_3-2})$ operations to compute the product of an $N \times M$ matrix with an $M \times P$ matrix for any choice of $N_0$.
Of course, this does not imply that the statement
\[
	\Tens{n,m,p} \leq \bigoplus_{i=1}^q s_i \odot \Tens{n_i,m_i,p_i} \implies \omega \leq \omega_0
\]
is false, but only that Algorithm~\ref{alg:MatrixMults} does not achieve this exponent unless $\omega_0 = \omega_\mathrm{max}$.
\begin{theorem}
	Algorithm \ref{alg:MatrixMults} takes $\Omega(N^{\omega_1-2}MP + NM^{\omega_2-2}P + NMP^{\omega_3-2})$ operations to compute the product of an $N \times M$ matrix with an $M \times P$ matrix.
\end{theorem}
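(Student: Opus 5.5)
We will prove the lower bound by induction, using the same recurrence as in the proof of Theorem~\ref{thm:GeneralASI}, now read in the opposite direction. The main point is that the algorithm does not change the recursion based on the sizes, except for the threshold $N_0$. Its cost is therefore at least the cost of the recursive calls, each of which is at least the number of multiplications performed at the leaves. We write $F(N,M,P) = N^{\omega_1-2}MP + NM^{\omega_2-2}P + NMP^{\omega_3-2}$. We aim to show $T(N,M,P) \geq c\,F(N,M,P)$ for some constant $c>0$ depending only on the restriction and on $N_0$.

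\textbf{Base case.} If $\min(N,M,P)\leq N_0$, the standard algorithm costs at least $NMP$ multiplications. Say $N\leq N_0$. Since $\omega_1-3<0$, we have $N^{\omega_1-2}MP = NMP\cdot N^{\omega_1-3} \leq NMP$. For the other two terms, $NM^{\omega_2-2}P\leq NMP$ and $NMP^{\omega_3-2}\leq NMP$ hold trivially because $\omega_2,\omega_3<3$. So $F\leq 3NMP$ in this regime, and any $c\leq 1/3$ works. The cases $M\leq N_0$ and $P\leq N_0$ are identical.

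\textbf{Inductive step.} If $\min(N,M,P)>N_0$, then $T(N,M,P)\geq \sum_i s_i T(\lceil N/n\rceil n_i,\lceil M/m\rceil m_i,\lceil P/p\rceil p_i)$. We drop the nonnegative addition cost and apply the induction hypothesis, which is valid because the subproblems have smaller volume, as in the proof of Theorem~\ref{thm:GeneralASI}. We then use $\lceil x\rceil\geq x$ and expand $F$ of each subproblem. The first component gives
\[
\sum_i s_i (Nn_i/n)^{\omega_1-2}(Mm_i/m)(Pp_i/p) = N^{\omega_1-2}MP\cdot\frac{\sum_i s_i n_i^{\omega_1-2}m_ip_i}{n^{\omega_1-2}mp} = N^{\omega_1-2}MP,
\]
by the defining equation of $\omega_1$. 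The same holds for the other two components. Hence $T(N,M,P)\geq c\,F(N,M,P)$ with the same constant $c$, and the induction closes without any loss. This is easier than the upper bound, because the ceilings only help here.

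\textbf{Main obstacle.} The delicate point is justifying the inequality $T\geq\sum_i s_i T(\text{subproblems})$ when padding occurs. We must check that $T$ is defined as the actual cost on the padded inputs, so that each recursive call really has the dimensions $\lceil N/n\rceil n_i$ and so on. It is also not monotone in a harmful way, since we never compare $T$ at different sizes. We must also ensure that the induction is well founded. Following Theorem~\ref{thm:GeneralASI}, we induct on the volume $NMP$ and use the conditions $n_i<n$ (and their analogues) together with the hypotheses on $N_0$ to guarantee that the subproblem volume is strictly smaller. If one wants to avoid those hypotheses, one can instead induct on the recursion depth, which is finite since the algorithm terminates. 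This gives the claimed $\Omega(F)$ bound.
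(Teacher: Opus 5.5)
Your proof is correct and is essentially the paper's argument: induction on the volume $NMP$ with constant $c=\frac13$, the base case $\min(N,M,P)\le N_0$ handled via $NMP\ge\frac13 NMP(N^{\omega_1-3}+M^{\omega_2-3}+P^{\omega_3-3})$, and the inductive step obtained by dropping the addition cost, using $\lceil x\rceil\ge x$, and invoking the defining equations of $\omega_1,\omega_2,\omega_3$. Your remarks on well-foundedness (strict decrease of the volume from $n_i<n$ and the hypotheses on $N_0$) make explicit what the paper leaves implicit; note only that the ceiling step also uses $\omega_1\ge 2$ (and likewise for $\omega_2,\omega_3$), which holds by the argument at the start of the proof of Theorem~\ref{thm:GeneralASI}.
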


\begin{proof}
	Let $T(N,M,P)$ be defined as in the previous proof, so we know that for $\min(N,M,P) \leq N_0$ we use the standard algorithm and hence $T(N,M,P) \geq NMP$, and otherwise we have
	\[
		T(N,M,P) \geq \sum_{i=1}^q s_i T(\lceil \frac{N}{n}\rceil n_i,\lceil \frac{M}{m}\rceil m_i,\lceil \frac{P}{p}\rceil p_i).
	\]
	We show by induction on $V$ that
	\[
		T(N,M,P) \geq \frac{1}{3}(N^{\omega_1-2}MP+NM^{\omega_2-2}P+NMP^{\omega_3-2})
	\]
	for all positive integers $N,M,P$ with $NMP < V$.
	First, we look at the base case, $V=N_0^3$ and so $\min(N,M,P) \leq N_0$.
	Since we know $2 \leq \omega_i < 3$ we have $N^{\omega_1-3} + M^{\omega_2-3} + P^{\omega_3-3} \leq 3$ and hence we have
	\[
		T(N,M,P) \geq NMP \geq \frac{1}{3}NMP\left( N^{\omega_1-3} + M^{\omega_2-3} + P^{\omega_3-3} \right).
	\]
	Next, for the induction step, we suppose that the statement holds for $V-1$, and we aim to show that for all positive integers $N,M,P$ with $NMP=V$ the statement also holds.
	First, consider that $\min(N,M,P) \leq N_0$. In this case, we have $N^{\omega_1-3} + M^{\omega_2-3} + P^{\omega_3-3} \leq 3$ and hence we have
	\[
		T(N,M,P) \geq NMP \geq \frac{1}{3}NMP\left( N^{\omega_1-3} + M^{\omega_2-3} + P^{\omega_3-3} \right).
	\]
	Otherwise, $\min(N,M,P)>N_0$ then we can apply the recurrence relation, and therefore
	\begin{align*}
		T(N,M,P) & \geq \sum_{i=1}^q s_i T(\lceil \frac{N}{n}\rceil n_i,\lceil \frac{M}{m}\rceil m_i,\lceil \frac{P}{p}\rceil p_i)                                      \\
		         & \geq \frac{1}{3}\lceil\frac{N}{n}\rceil ^ {\omega_1-2} \lceil \frac{M}{m} \rceil \lceil \frac{P}{p}\rceil \sum_{i=1}^q s_i n_i^{\omega_1-2}m_ip_i    \\
		         & \quad + \frac{1}{3}\lceil\frac{N}{n}\rceil \lceil \frac{M}{m} \rceil^ {\omega_2-2} \lceil \frac{P}{p}\rceil \sum_{i=1}^q s_i n_im_i^ {\omega_2-2}p_i \\
		         & \quad + \frac{1}{3}\lceil\frac{N}{n}\rceil \lceil \frac{M}{m} \rceil \lceil \frac{P}{p}\rceil^ {\omega_3-2} \sum_{i=1}^q s_i n_im_ip_i^ {\omega_3-2} \\
		         & \geq \frac13 \left( N^{\omega_1-2}MP + NM^{\omega_2-2}P + NMP^{\omega_3-2}\right).
	\end{align*}
	Hence, by induction, we are done.
	\qed\end{proof}

We are particularly interested in the square case $N=M=P$ and from Algorithm \ref{alg:MatrixMults} we have that we can multiply $N\times N$ matrices in $\O(N^{\omega_{\max}})$, but of course this is likely not sensible.
A restriction of the form
\[\Tens{n,m,p} \leq \bigoplus_{i=1}^q s_i \odot \Tens{n_i,m_i,p_i}\]
also gives restrictions
\[
	\Tens{m,p,n} \leq \bigoplus_{i=1}^q s_i \odot \Tens{m_i,p_i,n_i}
	\quad\text{and}\quad
	\Tens{p,n,m} \leq \bigoplus_{i=1}^q s_i \odot \Tens{p_i,n_i,m_i}
\]
and so we can take the tensor product of these.
This gives us the restriction
\[
	\Tens{nmp,nmp,nmp} \leq \bigoplus_{i,j,k=1}^q s_is_js_k \odot \Tens{n_im_jp_k,n_km_ip_j,n_jm_kp_i},
\]
which has the following property.

\begin{definition}
	We say that a tensor restriction $\Tens{n,n,n} \leq \bigoplus_{i=1}^q s_i \odot \Tens{n_i,m_i,p_i}$ is cyclically invariant if the multiset of blocks $\{ \Tens{n_i, m_i, p_i} \}$ (accounting for multiplicities $s_i$) is invariant under the cyclic permutation of its dimensions $\Tens{a,b,c} \mapsto \Tens{b,c,a}$.
\end{definition}

\begin{proposition} \label{prop:symmetric_assumptions}
	Let $\Tens{n,n,n} \leq \bigoplus_{i=1}^q s_i \odot \Tens{n_i,m_i,p_i}$ be a non-trivial cyclically invariant tensor restriction. Suppose that $\omega_1 < 3$, and that $n_i \leq n$, $m_i \leq n$, and $p_i \leq n$ for all $i$. Then the conditions $\rho > 1$ and $\sum_{i=1}^q s_i n_i m_i p_i < n^3$ are satisfied.
\end{proposition}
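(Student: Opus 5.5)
The plan is to exploit cyclic invariance so that everything reduces to a single one-variable function, and then to use monotonicity together with nontriviality. Put $m=p=n$ and
\[
f(x)=\sum_{i=1}^q s_i\Bigl(\frac{n_i}{n}\Bigr)^{x-2}\frac{m_ip_i}{n^2}.
\]
Cyclic invariance means the multiset $\{\Tens{n_i,m_i,p_i}\}$ equals $\{\Tens{m_i,p_i,n_i}\}$. Hence the three sums defining $\omega_1,\omega_2,\omega_3$ coincide, so $\omega_1=\omega_2=\omega_3$. For the same reason the three sums in the definition of $\rho$ coincide, giving $\rho=f(2)=\sum_i s_i n_im_i/n^2$. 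Since $n_i\le n$, every term of $f$ is non-increasing in $x$, so $f$ is non-increasing. Moreover $f$ is strictly decreasing as soon as some block with $s_i>0$ has $n_i<n$. By definition $f(\omega_1)=1$, and $f(3)=\sum_i s_in_im_ip_i/n^3$.

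\textbf{Step 1 (a block with $n_i<n$ exists).} Suppose every block with $s_i>0$ has $n_i=n$. By cyclic invariance this forces $m_i=p_i=n$ as well, so every block is $\Tens{n,n,n}$. Now compare the $A$-flattening ranks. The restriction cannot increase flattening rank, so $n^2\le\sum_i s_in_im_i=n^2\sum_i s_i$. Since $f$ is constant equal to $\sum_i s_i$, the equation $f(\omega_1)=1$ gives $\sum_i s_i=1$. The restriction is then $\Tens{n,n,n}\le\Tens{n,n,n}$, which is trivial. So some $n_i<n$, and $f$ is strictly decreasing.

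\textbf{Step 2 (volume condition).} Since $\omega_1<3$ and $f$ is strictly decreasing, $f(3)<f(\omega_1)=1$. This is exactly $\sum_i s_in_im_ip_i<n^3$.

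\textbf{Step 3 ($\rho\ge1$).} Write the restriction as $(\phi_A^*\otimes\phi_B^*\otimes\phi_C)$. The flattening $A\to B^*\otimes C$ of $\Tens{n,n,n}$ has rank $n^2$. The corresponding flattening of $\bigoplus_i s_i\odot\Tens{n_i,m_i,p_i}$ has rank at most $\sum_i s_in_im_i$, and restriction cannot increase this rank. Hence $\rho\ge1$.

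\textbf{Step 4 ($\rho>1$), the main obstacle.} It remains to exclude $\rho=1$; by strict monotonicity this is equivalent to excluding $\omega_1=2$. Assume $\sum_i s_in_im_i=n^2$, and by cyclic invariance the same holds for the other two flattenings. Because $\Tens{n,n,n}$ is concise, the flattening argument shows $\phi_A$ and $\phi_B$ are injective between spaces of equal dimension $n^2$, hence isomorphisms, and $\phi_C$ is surjective between equal-dimensional spaces, hence also an isomorphism. Therefore $\Tens{n,n,n}\cong\bigoplus_i s_i\odot\Tens{n_i,m_i,p_i}$.

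I would then show that $\Tens{n,n,n}$ admits no nontrivial direct-sum decomposition. The first approach I would try is the following. A direct sum $T_1\oplus T_2$ splits the space $A$ as $A_1\oplus A_2$ so that the bilinear map $\Phi$ sends $A_1\times B_2$ and $A_2\times B_1$ to zero. For matrix multiplication, the set of $Y$ with $XY=0$ for all $X\in A_1$ is exactly the set of matrices whose columns are annihilated by a common subspace of $K^n$ (the joint row space of $A_1$). The two orthogonality conditions then force one of $A_1,A_2$ to be zero. Concretely, for nonzero $X\in A_1$ and $X'\in A_2$ one can construct a rank-one $Y$ with $XY\ne0$ or $X'Y\ne0$ whose components contradict the splitting $B=B_1\oplus B_2$.

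Once indecomposability is established, $\Tens{n,n,n}$ must be isomorphic to a single block with $s=1$, and that block is $\Tens{n,n,n}$ itself. This is the trivial restriction, contradicting the hypothesis. Hence $\rho>1$.
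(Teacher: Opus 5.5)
Your argument is correct and follows the paper's proof.

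\begin{itemize}
\item You get $\rho\ge 1$ from a dimension count. The paper uses surjectivity of $\phi_C$, you use the $A$-flattening, and by cyclic invariance these give the same sum.
\item You get $\rho>1$ by showing that $\rho=1$ forces $\phi_A,\phi_B,\phi_C$ to be isomorphisms.
\item You get the volume bound from $f(3)<f(\omega_1)=1$. This is exactly the paper's step of multiplying $n^{\tau}=\sum_i s_in_i^{\tau-2}m_ip_i$ by $n^{3-\tau}$, and your Step~1 is the paper's equality case.
\end{itemize}

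The paper defines \emph{non-trivial} as ``not an isomorphism''. So Step~4 is finished as soon as you have shown that $\phi_A,\phi_B,\phi_C$ are isomorphisms. The phrase ``which is trivial'' in Step~1 needs this same argument. Alternatively, as the paper does, note that in that case $\rho=f(2)=\sum_i s_i=1$, which Step~4 excludes.

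You should delete the indecomposability paragraph. It is superfluous, and the mechanism you sketch is wrong: the two conditions $A_1B_2=0$ and $A_2B_1=0$ do not force $A_1=0$ or $A_2=0$. For a counterexample with $n\ge 2$, define:
\begin{itemize}
\item $A_1$ as the matrices supported on the first column, and $A_2$ as those supported on columns $2,\dots,n$;
\item $B_1$ as the matrices supported on the first row, and $B_2$ as those supported on rows $2,\dots,n$.
\end{itemize}
Then $A=A_1\oplus A_2$ and $B=B_1\oplus B_2$. Since $XY$ is the sum over $j$ of (column $j$ of $X$) times (row $j$ of $Y$), both cross products vanish. This is the non-direct sum $\Tens{n,n,n}=\Tens{n,1,n}+\Tens{n,n-1,n}$.

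What rules this out is the third factor. Both $A_1B_1$ and $A_2B_2$ span all of $K^{n\times n}$, so $C$ has no compatible splitting $C_1\oplus C_2$. A correct proof of indecomposability therefore has to use the $C$-side as well. With the paper's definition of non-trivial, you do not need it at all.
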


In the above proposition, by ``non-trivial'' we mean that the restriction is not an isomorphism, as this will not give a terminating algorithm.

\begin{proof}
	Because the restriction is cyclically invariant, the equations defining
	$\omega_1, \omega_2,$ and $\omega_3$ are the
	exact same. Therefore $\omega_1 = \omega_2 = \omega_3 =: \tau$.

	Since we have such a restriction, there exist linear maps
	\[
		\phi_A\colon R^{n \times n} \rightarrow \bigoplus_{i=1}^q s_i \odot R^{n_i \times m_i},\quad
		\phi_B\colon R^{n \times n} \rightarrow \bigoplus_{i=1}^q s_i \odot R^{m_i \times p_i},\quad
		\phi_C\colon \bigoplus_{i=1}^q s_i \odot R^{p_i \times n_i} \rightarrow R^{n \times n}
	\]
	such that for the isomorphism $\Phi\colon A^\ast\otimes B^\ast\otimes C\to\operatorname{Bil}(A,B;C)$ we have
	\[
		\Phi(\<n,n,n>)(\A,\B) = \phi_C\left( \Phi\left(\bigoplus_{i=1}^q s_i \odot \<n_i,m_i,p_i>\right)(\phi_A(\A),\phi_B(\B)) \right)
	\]
	for all $\A \in R^{n \times n}$ and $\B\in R^{n \times n}$.
	Since $\phi_C$ is surjective, we can see that $\dim \bigoplus_{i=1}^q s_i \odot R^{p_i \times n_i} \geq \dim R^{n \times n}$
	and hence $\sum_{i=1}^q s_i p_i n_i \geq n^2$. By cyclic invariance, this tells us that $\rho \geq 1$.

	Suppose for contradiction that $\rho = 1$, i.e $\sum_{i=1}^q s_i p_i n_i = n^2$.
	Since $\phi_C$ is a surjective map between two spaces of the same dimension, $\phi_C$ is an isomorphism.
	Suppose, again for contradiction, that $\phi_A$ is not injective, i.e there exists a non-zero $\A \in R^{n \times n}$
	such that $\phi_A(\A) = 0$. Then we have that
	\[
		\A I = \Phi(\<n,n,n>)(\A,I) = \phi_C\left( \Phi\left(\bigoplus_{i=1}^q s_i \odot \<n_i,m_i,p_i>\right)(0, \phi_B(I)) \right) = \phi_C(0) = 0,
	\]
	which is a contradiction. Therefore, $\phi_A$ must be injective, and since the codomain and domain have the same dimension, $\phi_A$ must be an isomorphism.
	Similarly, $\phi_B$ must also be an isomorphism, showing that this restriction is trivial.
	Thus, we must have $\rho > 1$.

	Finally, we establish the bound $\sum_{i=1}^q s_i n_i m_i p_i <
		n^3$.
	We have
	\[
		n^\tau = \sum_{i=1}^q s_i n_i^{\tau-2} m_i p_i
	\]
	Multiplying by $n^{3-\tau}$ gives
	\[
		n^3 = \sum_{i=1}^q s_i n_i^{\tau-2} m_i p_i n^{3-\tau} \geq \sum_{i=1}^q s_i n_i m_i p_i
	\]
	where the inequality follows from the fact that $3-\tau > 0$ and $n_i \leq n$ for all $i$.  In the case of equality we would have $n_i = n$ for all $i$, and hence
	\[
		n^\tau = \sum_{i=1}^q s_i n_i^{\tau-2} m_i p_i = \sum_{i=1}^q s_i n^{\tau-2} m_i p_i
	\]
	which implies $\sum_{i=1}^q s_i m_i p_i = n^2$ and thus $\rho = 1$, contradicting our previous conclusion that $\rho > 1$.
	Therefore, we must have $\sum_{i=1}^q s_i n_i m_i p_i < n^3$.
	\qed\end{proof}

Proposition \ref{prop:symmetric_assumptions} shows that we can apply Theorem \ref{thm:GeneralASI} to this symmetrized restriction, so Algorithm~\ref{alg:MatrixMults} takes $\O(N^{\omega_{\mathrm{sym}}})$ operations to multiply $N\times N$ matrices using this restriction on $\Tens{nmp,nmp,nmp}$, where
\[
	(nmp)^{\omega_{\mathrm{sym}}} = \sum_{i,j,k=1}^q s_is_js_k (n_im_jp_k)^{\omega_{\mathrm{sym}}-2}n_km_ip_jn_jm_kp_i.
\]
Note that the exponent $\omega_\mathrm{sym}$ is exactly the $\omega_0$ from Romani's generalized ASI.

We can also give an estimate of the leading coefficients for Algorithm \ref{alg:MatrixMults} by assuming all recursive steps can be done exactly down to a base case, which is with $N_0=1$, as is done to get the leading coefficient of 7 with Strassen's algorithm.
In this case, we have that the number of operations is bounded by
$L_1 N^{\omega_1-2}MP+L_2 NM^{\omega_2-2}P + L_3 NMP^{\omega_3-2}$ where
\[
	L_1 = \frac{1}{3}+ \frac{A_2}{-mp+\sum_i s_i m_ip_i},\quad
	L_2 = \frac{1}{3}+ \frac{A_3}{-pn+\sum_i s_ip_in_i},\quad
	L_3 = \frac{1}{3}+ \frac{A_1}{-nm+\sum_i s_in_im_i},
\]
and here $A_1$ is the number of arithmetic operations of the $n\times m$ blocks, $A_2$ the number of arithmetic operations of the $m \times p$ blocks and $A_3$ the number of arithmetic operations of the $p \times n$ blocks.
In particular, this gives us an algorithm to multiply $N \times N$ matrices with exponent $\omega_{\max}$ and leading coefficient $\leq L_1 + L_2 + L_3$.
In the special case $\Tens{2,2,2} \leq 7 \odot \Tens{1,1,1}$ using Strassen's algorithm with $A_1+A_2+A_3 = 18$ this gives the leading coefficient 7 as expected.

Now we can apply this to the cyclically invariant restriction on $\<nmp,nmp,nmp>$ we constructed. Suppose the restriction on $\<n,m,p>$ requires $A_1,A_2,A_3$ arithmetic operations, the restriction on $\<m,p,n>$ requires $B_1,B_2,B_3$ arithmetic operations, and the restriction on $\<p,n,m>$ requires $C_1,C_2,C_3$ arithmetic operations, and let $R = \sum_{i=1}^q s_i n_i m_i p_i$. Then the cyclically invariant restriction on $\<nmp,nmp,nmp>$ we constructed uses Algorithm \ref{alg:MatrixMults}  to multiply $N\times N$ with exponent $\omega_\mathrm{sym}$ and leading coefficient at most
\begin{equation}\label{eq:leadingcoeff}
	1 + \frac{nmp(A_1p + A_2n + A_3m) + R(B_1pn + B_2nm + B_3mp) + R^2(C_1+C_2+C_3)}{-(nmp)^2 + \sum_{i,j,k=1}^q s_in_im_i\,s_jm_jp_j\,s_kn_kp_k}.
\end{equation}
This leading coefficient is slightly crudely bounded and one can achieve a
slightly better coefficient, though this isn't as easy to write down. Note also
that our bounds on the leading coefficients only hold under the assumption that all
divisions can be done exactly. This is, of course, not true in general. For
example, a $17\times 17$ matrix multiplication problem takes more than $7\cdot
	17^{\log_27}$ arithmetic operations using Strassen's algorithm, due to the need
to zero pad. We compute our algorithms' leading coefficients with this
assumption as to better compare them to existing bounds.

\section{Specific Decompositions}

We have already used the Moosbauer and Poole's tensor decomposition of
$\Tens{6,6,6}$ in the form $137\odot\Tens{1}\oplus8\odot\Tens{2,1,1}$ as
illustrating example. According to Theorem~\ref{thm:GeneralASI}, this
decomposition leads to an exponent $2.805065$, slightly better than the exponent
$2.80754$ announced by Moosbauer and Poole and than Strassen's exponent
$2.80735$, though not as good as the exponent $2.7925$ of the decompositions of
Novikov et al.~\cite{ALPHAEVOLVE2025} and Dumas, Pernet, and
Sedoglavic~\cite{DUMAS2025}. We have repeated the computation of Moosbauer and
Pool in order to generate further decompositions of $\Tens{6,6,6}$ in the hope
to find some that contain more copies of $\Tens{2,1,1}$. The best we found
contains 18 such copies and thus leads to an exponent~$2.8019$. The leading
coefficient for this decomposition turns out to be~$7.67$, compared to $7$ for
Strassen's algorithm.

We have also repeated the computation of Moosbauer and Poole for $\Tens{5,5,5}$
in the hope to find a decomposition with a better structure.
The best we found leads to an exponent $2.8091$, slightly worse than Strassen's exponent.

In a next step, for various choices of $n,m,p$, we have searched for
decompositions of $\Tens{n,m,p}$ that contain copies of $\Tens{1,2,2}$ and/or $\Tens{k,1,1}$
for some~$k$, or permutations of these.
This search procedure is described in more detail below.
The results are summarized in Table~\ref{tab:1}.
In this table, $\omega_{\text{rank}}$ refers to the exponent reported in Sedoglavic's table~\cite{SEDOGLAVICTABLE},
$\omega_{\mathrm{sym}}$ refers to the exponent obtained via Theorem~\ref{thm:GeneralASI},
$L$~is the leading coefficient given by~\eqref{eq:leadingcoeff} where the various addition counts were determined
by the software of M\aa rtensson and Wagner~\cite{MARTENSSON2025}.
\begin{table}
	\begin{center}
		\begin{tabular}{@{}cc|ccl@{}}
			$nmp$ & $\omega_{\mathrm{rank}}$ & $\omega_{\mathrm{sym}}$ & $L$  & structure                                                                                                             \\\hline
			336   & 2.77430                  & 2.77430                 & n/a  & $40\odot\<1,1,1>$                                                                                                     \\
			444   & 2.79248                  & 2.79248                 & n/a  & $48\odot\<1,1,1>$                                                                                                     \\
			346   & 2.79820                  & 2.79820                 & 6.94 & $54\odot\<1,1,1>$                                                                                                     \\\hline
			666   & 2.80754                  & 2.80190                 & 7.67 & $6\odot\<1,1,2>\oplus6\odot\<2,1,1>\oplus6\odot\<1,2,1>\oplus117\odot\<1,1,1>$                                        \\
			337   & 2.81803                  & 2.80525                 & 8.88 & $10\odot\<1,1,2>\oplus29\odot\<1,1,1>$                                                                                \\
			567   & 2.81122                  & 2.80547                 & 7.57 & $6\odot\<1,1,2>\oplus2\odot\<1,1,3>\oplus2\odot\<3,1,1>\oplus3\odot\<1,2,1>\oplus120\odot\<1,1,1>$                    \\
			566   & 2.81200                  & 2.80566                 & 7.67 & $5\odot\<1,1,2>\oplus\<1,1,3>\oplus\<3,1,1>\oplus5\odot\<1,2,1>\oplus\<1,3,1>\oplus101\odot\<1,1,1>$                  \\
			568   & 2.81124                  & 2.80626                 & 7.60 & $13\odot\<1,1,2>\oplus\<1,1,3>\oplus3\odot\<1,2,1>\oplus135\odot\<1,1,1>$                                             \\
			556   & 2.81430                  & 2.80643                 & 7.81 & $4\odot\<1,1,2>\oplus2\odot\<2,1,1>\oplus2\odot\<3,1,1>\oplus2\odot\<1,2,1>\oplus2\odot\<1,3,1>\oplus82\odot\<1,1,1>$ \\\hline
			222   & 2.80735                  & 2.80735                 & 7.00 & $7\odot\<1,1,1>$                                                                                                      \\\hline
			557   & 2.81378                  & 2.80831                 & 7.57 & $8\odot\<1,1,2>\oplus\<1,1,3>\oplus\<2,1,1>\oplus3\odot\<1,2,1>\oplus100\odot\<1,1,1>$                                \\
			558   & 2.81400                  & 2.80855                 & 7.76 & $8\odot\<1,1,2>\oplus2\odot\<1,1,3>\oplus2\odot\<2,1,1>\oplus2\odot\<1,2,1>\oplus114\odot\<1,1,1>$                    \\
			555   & 2.81626                  & 2.80911                 & 7.55 & $3\odot\<1,1,2>\oplus\<1,1,3>\oplus\<3,1,1>\oplus3\odot\<1,2,1>\oplus\<1,3,1>\oplus72\odot\<1,1,1>$                   \\
			457   & 2.81954                  & 2.81152                 & 7.78 & $10\odot\<1,1,2>\oplus2\odot\<1,1,3>\oplus2\odot\<1,2,1>\oplus74\odot\<1,1,1>$                                        \\
			467   & 2.81746                  & 2.81199                 & 7.58 & $10\odot\<1,1,2>\oplus2\odot\<2,1,1>\oplus2\odot\<1,2,1>\oplus95\odot\<1,1,1>$                                        \\
			234   & 2.82789                  & 2.81214                 & 7.69 & $4\odot\<1,1,2>\oplus12\odot\<1,1,1>$                                                                                 \\
			458   & 2.82001                  & 2.81275                 & 7.99 & $12\odot\<1,1,2>\oplus2\odot\<1,1,3>\oplus\<1,2,1>\oplus86\odot\<1,1,1>$                                              \\
			237   & 2.85366                  & 2.81336                 & 12.4 & $11\odot\<1,1,2>\oplus4\odot\<1,1,3>\oplus\<1,1,1>$                                                                   \\
			334   & 2.81899                  & 2.81359                 & 6.47 & $\<1,1,3>\oplus26\odot\<1,1,1>$                                                                                       \\
			577   & 2.81962                  & 2.81366                 & 7.86 & $9\odot\<1,1,2>\oplus\<1,1,5>\oplus\<3,1,1>\oplus7\odot\<1,2,1>\oplus136\odot\<1,1,1>$                                \\
		\end{tabular}
	\end{center}
	\caption{New decompositions found for various matrix multiplication tensors $\<n,m,p>$.
		The rows for 336, 444, 367, and 222 are included for comparison only.
		For 222 and 367, we report the leading coefficients resulting from our computations,
		although for 222 it is known that the leading coefficient can be reduced to~$6$,
		which suggests that for the other leading coefficients there may also still be room
		for improvement.
		For 336 and 444, the tool of M\aa rtensson and Wagner~\cite{MARTENSSON2025} is not applicable because these decompositions
		involve rational coefficients. It is known though~\cite{DUMAS2025} that for 444, a leading
		coefficient of $7$ can be achieved.
	}\label{tab:1}
\end{table}

It turned out that besides 666, there are several other formats where we obtain an exponent smaller
than Strassen's, but none of them beats the currently best known exponents for 336, 444, or 346.
The decompositions claimed in the table are available electronically at
\begin{center}
	\url{https://github.com/mkauers/matrix-multiplication/structured}
\end{center}
The following corollary applies to formats like 228, which can be factorized as
tensor product (e.g., $\<2,2,8>=\<2,2,2>\otimes\<1,1,4>$). Our search procedure has
encountered several of these, but since they are predictable, they are not included
in the table.

\begin{corollary}
	Given restrictions $\phi,\phi'$ with $\<n,m,p> \leq_{\phi} \bigoplus_{i=1}^q s_i \odot \<n_i,m_i,p_i>$
	and $\<n',m',p'> \leq_{\phi'} \bigoplus_{i=1}^{q'} s_i' \odot \<n_i',m_i',p_i'>$ which, when symmetrized,
	achieve exponents $\omega,\omega'$ respectively
	in Algorithm \ref{alg:MatrixMults}. Then there exists a restriction $\phi''$ on $\<nn',mm',pp'>$
	which, when symmetrized, achieves exponent $\omega'' = \min(\omega,\omega')$ in Algorithm \ref{alg:MatrixMults}.
\end{corollary}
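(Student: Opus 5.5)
The plan is to ignore the worse of the two restrictions and pad the better one by an identity factor. Without loss of generality assume $\omega\le\omega'$, so the goal is a restriction on $\<nn',mm',pp'>$ whose symmetrization achieves exponent $\omega$.

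\textbf{Construction of $\phi''$.} Take the trivial restriction $\<n',m',p'> \leq \<n',m',p'>$, given by identity maps. Form its Kronecker product with $\phi$. Since restrictions are compatible with $\otimes$ and $\oplus$, and $\<a,b,c>\otimes\<a',b',c'>\cong\<aa',bb',cc'>$, this gives
\[
	\<nn',mm',pp'> \leq_{\phi''} \bigoplus_{i=1}^q s_i \odot \<n_in',m_ip',p_ip'>.
\]
Let $\<a,b,c>=\<n',m',p'>$ for brevity, so the blocks are $\<n_ia,m_ib,p_ic>$. The exponent of the symmetrized restriction is the $\omega_{\mathrm{sym}}$ of Proposition~\ref{prop:symmetric_assumptions} and the discussion after it.

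\textbf{Symmetrizing and comparing exponents.} Symmetrizing $\phi''$ gives the cyclically invariant restriction of
\[
	\<nmp\,abc,\ nmp\,abc,\ nmp\,abc>
\]
into the blocks $s_is_js_k\odot\<n_im_jp_k\,abc,\ n_km_ip_j\,abc,\ n_jm_kp_i\,abc>$. Every dimension is therefore multiplied by the same factor $abc=n'm'p'$. In the defining equation for $\omega_{\mathrm{sym}}$:
\begin{itemize}
	\item the left side gains the factor $(abc)^{\tau}$;
	\item each summand on the right gains $(abc)^{\tau-2}\cdot abc\cdot abc=(abc)^{\tau}$.
\end{itemize}
The common factor cancels, so the equation for $\phi''$ is exactly the equation for $\phi$. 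Hence $\omega_{\mathrm{sym}}(\phi'')=\omega=\min(\omega,\omega')$.

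Equivalently, the symmetrization of $\phi''$ is the symmetrization of $\phi$ tensored with the identity on $\<n'm'p',n'm'p',n'm'p'>$. In that form the cancellation is immediate.

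\textbf{Applicability of Theorem~\ref{thm:GeneralASI} (the main obstacle).} It remains to check that Algorithm~\ref{alg:MatrixMults} actually attains this exponent, i.e.\ that the symmetrized $\phi''$ satisfies the hypotheses of Theorem~\ref{thm:GeneralASI}.

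\begin{itemize}
	\item \emph{Dimension bounds and non-triviality.} The symmetrized $\phi''$ is cyclically invariant, and every block dimension is at most the corresponding dimension $nmp\,n'm'p'$. It is non-trivial because the symmetrized $\phi$ is non-trivial.
	\item \emph{$\rho>1$ and the volume condition.} Since $\omega<3$, Proposition~\ref{prop:symmetric_assumptions} gives $\rho>1$ and $\sum s\,n_im_ip_i<\text{vol}$. Both quantities are in any case invariant under the uniform scaling by $abc$: $\rho$ is a ratio, and the volume inequality scales on both sides.
	\item \emph{Strict inequalities and $N_0$.} The strict inequalities $n_i<n$ etc.\ are unchanged by multiplying through by $abc$. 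The $N_0$ thresholds only require choosing $N_0$ large enough, which is always possible.
\end{itemize}

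The one genuinely delicate point is the strict inequalities. If a block of the symmetrized $\phi$ had a dimension equal to the full size, it still does after scaling. So this point inherits exactly whatever held for $\phi$, which is precisely what is meant by "$\phi$ achieves $\omega$ when symmetrized."
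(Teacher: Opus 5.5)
Your proposal is correct and follows the paper's proof essentially verbatim. You assume $\omega\le\omega'$, tensor $\phi$ with the identity restriction on $\<n',m',p'>$, symmetrize, and observe that the common factor $(n'm'p')^{\omega''}$ cancels from the defining equation, which then reduces to the one for $\omega$. Your check that the hypotheses of Theorem~\ref{thm:GeneralASI} survive the uniform scaling, which the paper passes over in a single phrase, is a welcome addition. Fix the typo $\<n_in',m_ip',p_ip'>$; it should read $\<n_in',m_im',p_ip'>$.
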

\begin{proof}
	Without loss of generality, suppose $\omega\leq\omega'$. Then we can take $\phi''$ to give the restriction
	\begin{align*}
		\<nn',mm',pp'> \cong \<n,m,p> \otimes \<n',m',p'> & \leq_{\phi''} \left( \bigoplus_{i=1}^q s_i \odot \<n_i,m_i,p_i> \right) \otimes \<n',m',p'> \\
		                                                  & \cong \bigoplus_{i=1}^q s_i \odot \<n_in',m_im',p_ip'>,
	\end{align*}
	given via the tensor product of $\phi$ with the identity. Applying Theorem \ref{thm:GeneralASI} after symmetrizing, we get exponent $\omega''$ defined by
	\[
		(nmp)^{\omega''}(n'm'p')^{\omega''} = \sum_{i,j,k=1}^q s_is_js_k (n_im_jp_k)^{\omega''-2}n_km_ip_jn_jm_kp_i(n'm'p')^{\omega''}.
	\]
	Note that by dividing both sides by $(n'm'p')^{\omega''}$ we obtain the defining equation for~$\omega$.
	Therefore, $\omega'' = \omega$.
	\qed
\end{proof}

In order to find these decompositions, we proceeded according to the following steps.

\textbf{Step 1.} 
In order to obtain a decomposition containing $\<1,2,2>$, we first apply a flip graph
search~\cite{KAUERS2022,KAUERS2023,MOOSBAUER2025,KAUERS2025,PERMINOV2025} to the tensor
\begin{alignat*}1
	 & \sum_{i=1}^n\sum_{j=1}^m\sum_{k=1}^p a_{i,j}\otimes b_{j,k}\otimes c_{k,i}
	- a_{1,1}\otimes b_{1,1} \otimes c_{1,1}
	- a_{2,1}\otimes b_{1,1} \otimes c_{1,2}
	- a_{1,2}\otimes b_{2,1} \otimes c_{1,1}
	- a_{2,2}\otimes b_{2,1} \otimes c_{1,2}
\end{alignat*}
in order to find a decomposition with as low rank as possible.
Adding
\begin{alignat*}1
	a_{1,1}\otimes b_{1,1} \otimes c_{1,1}
	+ a_{2,1}\otimes b_{1,1} \otimes c_{1,2}
	+ a_{1,2}\otimes b_{2,1} \otimes c_{1,1}
	+ a_{2,2}\otimes b_{2,1} \otimes c_{1,2}
\end{alignat*}
to the result yields a decomposition of $\<n,m,p>$ which contains a copy of $\<1,2,2>$.
Compared to the computation time invested into this first step, the computation time of
all the subsequent steps is negligible.
Alas, while we did find decompositions containing $\<1,2,2>$ for certain formats matching
the best known rank for the format, none of them appears in Table~\ref{tab:1} because they
all were outperformed by decompositions involving copies of $\<1,1,k>$.

\textbf{Step 2.} 
Flip graph searches have so far only been employed in order to find decompositions of low rank.
The same technique can however also be used to optimize other features of a decomposition.
On the decompositions obtained from step~1, as well as on the decompositions computed in~\cite{KAUERS2025},
we performed a flip graph search with the aim of
maximizing the number of copies of $\<1,1,k>$, $\<1,\ell,1>$, $\<m,1,1>$ contained in it.
Note that these patterns are quite easy to detect. They amount to two components $a\otimes b\otimes c$
which have one factor in common.
It must be noted however that in order to apply Algorithm~\ref{alg:MatrixMults}, we must use a selection
of copies of $\<1,1,k>$, $\<1,\ell,1>$, $\<m,1,1>$ that do not overlap.
For example,
\[
	a\otimes b\otimes c
	+ a\otimes b'\otimes c'
	+ a''\otimes b'\otimes c''
\]
contains a copy of $\<2,1,1>$ (because $a$ appears twice) as well as a copy of $\<1,2,1>$
(because $b'$ appears twice), but we must not use both of them because they overlap in
$a\otimes b'\otimes c'$.

\textbf{Step 3.} 
To the decompositions obtained in Step~2, we next apply a number of random elements of
de Groote's symmetry group~\cite{DEGROOTE19781,DEGROOTE19782} in search for an orbit element with a small support.
In principle, it would also be possible to minimize support with a flip graph search,
but this approach would likely destroy the structure constructed in steps 1 and~2.
In contrast, the structure is invariant under symmetries, and therefore preserved by
the application of elements of the symmetry group.

\textbf{Step 4.} 
Up to this point, all computations are done for the field $K=\set Z_2$.
In the next step,
we lift the coefficients of the decomposition to integers. There are various ways for
doing this.
Hensel lifting~\cite{KAUERS2023,VONZURGATHEN1999} was used in earlier papers about flip graph
searches.
However, as pointed out by Kemper~\cite{KEMPER2025}, the resulting coefficients tend
to be more complicated than necessary.
In several instances where Hensel lifting led
to decompositions involving rational numbers with rather lengthy numerators and denominators,
he was able to obtain a decomposition involving only the coefficients $-1,0,1$, starting
from the same decomposition over~$\set Z_2$.
More critically, we must take care that the structure imposed on the decomposition during
steps 1 and~2 is preserved during lifting.
This is not automatically ensured. For example,
the decomposition of $\<6,6,6>$ presented by Moosbauer and Poole contains two copies of
$\<1,1,1>$ which become a copy of $\<1,1,2>$ when coefficients are taken modulo~$2$.
We continue to use Hensel lifting and address this issue by imposing additional constraints
in order to ensure that the structure of the decomposition is preserved.
In the present
context, Hensel lifting leads to an underdetermined linear system over~$\set Z_2$, and the
additional constraints can be easily encoded as additional linear equations which we append
to this system.
In the same way, we try to preserve the sparsity from Step~3 by imposing
additional constraints so as to ensure that every zero in $\set Z_2$ will remain a zero
during the lifting.
This may seem somewhat aggressive, but it worked surprisingly well and
in many cases led to decompositions involving only the coefficients $-1,0,1$.
In Table~\ref{tab:2}, we list some of the that we have not been able to lift from $\set Z_2$ to~$\set Q$.

\textbf{Step 5.} 
Finally, we determine for each decomposition the number of additions needed to execute it.
For this step, we employ software provided by M\aa rtensson and Wagner ~\cite{MARTENSSON2025}.

\begin{table}
	\begin{center}
		\begin{tabular}{@{}cc|cl@{}}
			$nmp$ & $\omega_{\mathrm{rank}}$ & $\omega_{\mathrm{sym}}$ & structure                                                                             \\\hline
			444   & 2.77729                  & 2.77729                 & $47\odot\<1,1,1>$                                                                     \\\hline
			455   & 2.79498                  & 2.79153                 & $2\odot\<2,1,1>\oplus\<3,1,1>\oplus66\odot\<1,1,1>$                                   \\
			445   & 2.80305                  & 2.79290                 & $3\odot\<1,1,2>\oplus3\odot\<1,3,1>\oplus45\odot\<1,1,1>$                             \\
			447   & 2.82462                  & 2.79833                 & $38\odot\<1,1,2>\oplus9\odot\<1,1,1>$                                                 \\
			338   & 2.81107                  & 2.79855                 & $11\odot\<1,1,2>\oplus33\odot\<1,1,1>$                                                \\
			446   & 2.81998                  & 2.80121                 & $21\odot\<1,1,2>\oplus\<1,1,3>\oplus28\odot\<1,1,1>$                                  \\
			456   & 2.81273                  & 2.80401                 & $\<1,1,2>\oplus\<3,1,1>\oplus2\odot\<1,2,1>\oplus4\odot\<1,3,1>\oplus68\odot\<1,1,1>$ \\
			356   & 2.81312                  & 2.80646                 & $8\odot\<1,1,2>\oplus52\odot\<1,1,1>$                                                 \\
			344   & 2.81896                  & 2.80674                 & $7\odot\<1,2,1>\oplus24\odot\<1,1,1>$                                                 \\
			333   & 2.85405                  & 2.83686                 & $2\odot\<1,2,1>\oplus15\odot\<1,1,1>\oplus\<1,2,2>$
		\end{tabular}
	\end{center}
	\caption{New decompositions found for various matrix multiplication tensors $\<n,m,p>$, only valid for $K=\set Z_2$.
		The row for 444 is included for comparison only.
		The structure for 333 is noteworthy because it contains a copy of $\<1,2,2>$.
		Without prescribing this component, the best decomposition we find for 333 is $4\odot\<1,2,1>\oplus15\otimes\<1,1,1>$,
		which gives rise to the slightly larger exponent $2.84297$.
	}\label{tab:2}
\end{table}

\section{Simulation}

Since the bounds on the leading coefficient do not fully reflect the actual operation count of the algorithms, we simulate the recursive calls performed by our algorithm for different input sizes $N$ in order to get a more accurate idea of the runtime of the algorithm.
Of course this approach does not take into account memory access costs and other practical considerations, but it does give a better estimate of the operation count than just the leading coefficient.
We compare Winograd's variant of Strassen's algorithm, which has a leading coefficient of~$6$, to our algorithm using the decomposition of $\Tens{6,6,6}$ we found.
The blue $\times$ and orange $+$ in Figure~\ref{fig:sim}
show the simulated operation counts for the algorithms, where for Moosbauer-Poole, we switch to Winograd-Strassen for $N<10^4$ and in both cases switch to the standard algorithm for $N<35$ as one would do in an actual implementation.
The lines show the complexity estimate using the leading coefficient according to the formulas given above.
The reason that our simulation shows lower operation counts than the reported bounds is that we switch to more efficient algorithms for small matrix sizes, while in the analysis we assumed that the recursion is performed exactly all the way down to $1\times 1$ matrices.
We can see that the results from our simulation do not form a smooth curve.
This is due to the necessary zero padding, which makes the algorithms sensitive to the input size.
In the simulations we start to see improvements over Strassen's algorithm around matrix sizes of about $10^6$ and a consistent outperformance starting at $10^{10}$.
\begin{figure}
	\begin{center}
		\includegraphics[scale = 0.52]{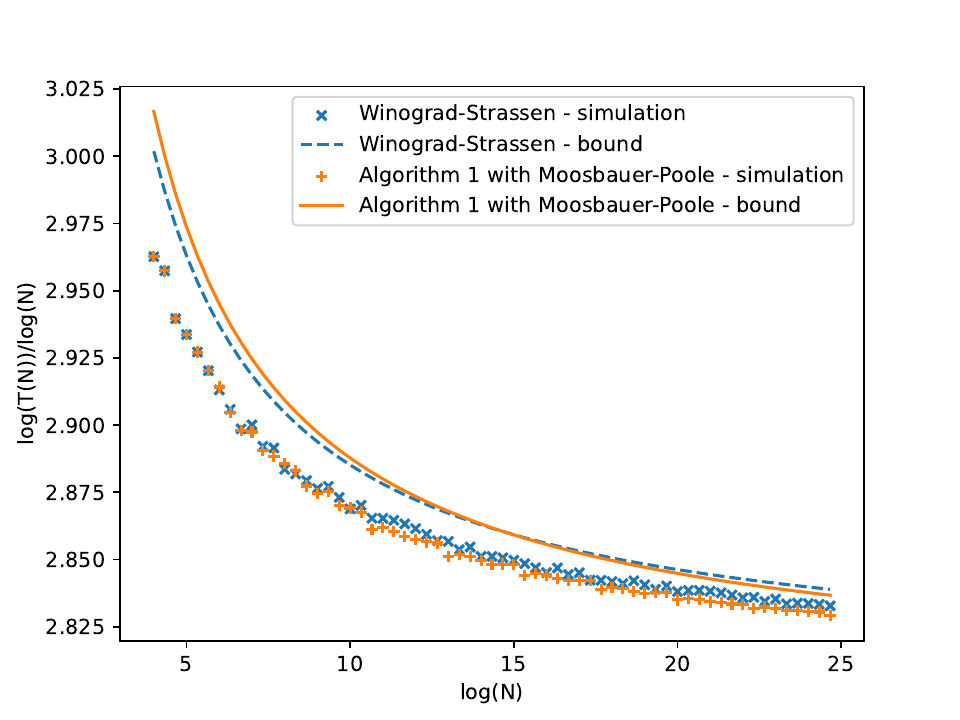}
	\end{center}
	\caption{Simulated operation count for our algorithm and Strassen's algorithm}
	\label{fig:sim}
\end{figure}

\bibliographystyle{elsarticle-num}
\bibliography{main}

\begin{thebibliography}{10}
\expandafter\ifx\csname url\endcsname\relax
  \def\url#1{\texttt{#1}}\fi
\expandafter\ifx\csname urlprefix\endcsname\relax\def\urlprefix{URL }\fi
\expandafter\ifx\csname href\endcsname\relax
  \def\href#1#2{#2} \def\path#1{#1}\fi

\bibitem{STRASSEN1969}
V.~STRASSEN, \href{http://eudml.org/doc/131927}{Gaussian elimination is not
  optimal.}, Numerische Mathematik 13 (1969) 354--356.
\newline\urlprefix\url{http://eudml.org/doc/131927}

\bibitem{BINI1979}
D.~Bini, M.~Capovani, F.~Romani, G.~Lotti, O(n2.7799) complexity for n × n
  approximate matrix multiplication, Information Processing Letters 8 (06
  1979).
\newblock \href {https://doi.org/10.1016/0020-0190(79)90113-3}
  {\path{doi:10.1016/0020-0190(79)90113-3}}.

\bibitem{SCHONHAGE1981}
A.~Sch\"{o}nhage, \href{https://doi.org/10.1137/0210032}{Partial and total
  matrix multiplication}, SIAM J. Comput. 10~(3) (1981) 434–455.
\newblock \href {https://doi.org/10.1137/0210032} {\path{doi:10.1137/0210032}}.
\newline\urlprefix\url{https://doi.org/10.1137/0210032}

\bibitem{STRASSEN1986}
V.~Strassen, The asymptotic spectrum of tensors and the exponent of matrix
  multiplication, in: 27th Annual Symposium on Foundations of Computer Science
  (sfcs 1986), 1986, pp. 49--54.
\newblock \href {https://doi.org/10.1109/SFCS.1986.52}
  {\path{doi:10.1109/SFCS.1986.52}}.

\bibitem{COPPERSMITH1990}
D.~Coppersmith, S.~Winograd,
  \href{https://www.sciencedirect.com/science/article/pii/S0747717108800132}{Matrix
  multiplication via arithmetic progressions}, Journal of Symbolic Computation
  9~(3) (1990) 251--280, computational algebraic complexity editorial.
\newblock \href {https://doi.org/https://doi.org/10.1016/S0747-7171(08)80013-2}
  {\path{doi:https://doi.org/10.1016/S0747-7171(08)80013-2}}.
\newline\urlprefix\url{https://www.sciencedirect.com/science/article/pii/S0747717108800132}

\bibitem{ALMAN2024}
J.~Alman, R.~Duan, V.~V. Williams, Y.~Xu, Z.~Xu, R.~Zhou,
  \href{https://arxiv.org/abs/2404.16349}{More asymmetry yields faster matrix
  multiplication} (2024).
\newblock \href {http://arxiv.org/abs/2404.16349} {\path{arXiv:2404.16349}}.
\newline\urlprefix\url{https://arxiv.org/abs/2404.16349}

\bibitem{KAUERS2022}
M.~Kauers, J.~Moosbauer, \href{https://arxiv.org/abs/2210.04045}{The
  fbhhrbnrssshk-algorithm for multiplication in $\mathbb{Z}_2^{5\times5}$ is
  still not the end of the story} (2022).
\newblock \href {http://arxiv.org/abs/2210.04045} {\path{arXiv:2210.04045}}.
\newline\urlprefix\url{https://arxiv.org/abs/2210.04045}

\bibitem{KAUERS2023}
M.~Kauers, J.~Moosbauer, \href{https://doi.org/10.1145/3597066.3597120}{Flip
  graphs for matrix multiplication}, in: Proceedings of the 2023 International
  Symposium on Symbolic and Algebraic Computation, ISSAC '23, Association for
  Computing Machinery, New York, NY, USA, 2023, p. 381–388.
\newblock \href {https://doi.org/10.1145/3597066.3597120}
  {\path{doi:10.1145/3597066.3597120}}.
\newline\urlprefix\url{https://doi.org/10.1145/3597066.3597120}

\bibitem{KAUERS2025}
M.~Kauers, I.~Wood, \href{https://arxiv.org/abs/2510.19787}{Exploring the meta
  flip graph for matrix multiplication} (2025).
\newblock \href {http://arxiv.org/abs/2510.19787} {\path{arXiv:2510.19787}}.
\newline\urlprefix\url{https://arxiv.org/abs/2510.19787}

\bibitem{MOOSBAUER2023}
M.~Kauers, J.~Moosbauer, \href{https://arxiv.org/abs/2306.00882}{Some new
  non-commutative matrix multiplication algorithms of size $(n,m,6)$} (2023).
\newblock \href {http://arxiv.org/abs/2306.00882} {\path{arXiv:2306.00882}}.
\newline\urlprefix\url{https://arxiv.org/abs/2306.00882}

\bibitem{MOOSBAUER2025}
J.~Moosbauer, M.~Poole, \href{https://doi.org/10.1145/3747199.3747566}{Flip
  graphs with symmetry and new matrix multiplication schemes}, in: Proceedings
  of the 2025 International Symposium on Symbolic and Algebraic Computation,
  ISSAC '25, Association for Computing Machinery, New York, NY, USA, 2025, p.
  233–239.
\newblock \href {https://doi.org/10.1145/3747199.3747566}
  {\path{doi:10.1145/3747199.3747566}}.
\newline\urlprefix\url{https://doi.org/10.1145/3747199.3747566}

\bibitem{PERMINOV2025}
A.~I. Perminov, \href{https://arxiv.org/abs/2511.20317}{Fast matrix
  multiplication via ternary meta flip graphs} (2025).
\newblock \href {http://arxiv.org/abs/2511.20317} {\path{arXiv:2511.20317}}.
\newline\urlprefix\url{https://arxiv.org/abs/2511.20317}

\bibitem{WOOD2025}
M.~Kauers, I.~Wood, \href{https://arxiv.org/abs/2505.05896}{Consequences of the
  moosbauer-poole algorithms} (2025).
\newblock \href {http://arxiv.org/abs/2505.05896} {\path{arXiv:2505.05896}}.
\newline\urlprefix\url{https://arxiv.org/abs/2505.05896}

\bibitem{SMIRNOV2013}
A.~Smirnov, The bilinear complexity and practical algorithms for matrix
  multiplication, Computational Mathematics and Mathematical Physics 53 (12
  2013).
\newblock \href {https://doi.org/10.1134/S0965542513120129}
  {\path{doi:10.1134/S0965542513120129}}.

\bibitem{KAPORIN2024}
I.~Kaporin,
  \href{https://journals.eco-vector.com/2686-9543/article/view/647987}{Semi-analytical
  solution of brent equations}, Doklady Mathematics 518~(1) (2024) 29--34.
\newline\urlprefix\url{https://journals.eco-vector.com/2686-9543/article/view/647987}

\bibitem{NOVIKOV2025}
A.~Novikov, N.~Vũ, M.~Eisenberger, E.~Dupont, P.-S. Huang, A.~Z. Wagner,
  S.~Shirobokov, B.~Kozlovskii, F.~J.~R. Ruiz, A.~Mehrabian, M.~P. Kumar,
  A.~See, S.~Chaudhuri, G.~Holland, A.~Davies, S.~Nowozin, P.~Kohli, M.~Balog,
  \href{https://arxiv.org/abs/2506.13131}{Alphaevolve: A coding agent for
  scientific and algorithmic discovery} (2025).
\newblock \href {http://arxiv.org/abs/2506.13131} {\path{arXiv:2506.13131}}.
\newline\urlprefix\url{https://arxiv.org/abs/2506.13131}

\bibitem{FAWZI2022}
A.~Fawzi, M.~Balog, A.~Huang, T.~Hubert, B.~Romera-Paredes, M.~Barekatain,
  A.~Novikov, F.~J.~R. Ruiz, J.~Schrittwieser, G.~Swirszcz, D.~Silver,
  D.~Hassabis, P.~Kohli, Discovering faster matrix multiplication algorithms
  with reinforcement learning, Nature 610~(7930) (2022) 47--53.
\newblock \href {https://doi.org/10.1038/s41586-022-05172-4}
  {\path{doi:10.1038/s41586-022-05172-4}}.

\bibitem{PROBERT1976}
R.~L. Probert, \href{https://doi.org/10.1137/0205016}{On the additive
  complexity of matrix multiplication}, SIAM Journal on Computing 5~(2) (1976)
  187--203.
\newblock \href {https://doi.org/10.1137/0205016} {\path{doi:10.1137/0205016}}.
\newline\urlprefix\url{https://doi.org/10.1137/0205016}

\bibitem{KARSTADT2017}
E.~Karstadt, O.~Schwartz, \href{https://doi.org/10.1145/3087556.3087579}{Matrix
  multiplication, a little faster}, in: Proceedings of the 29th ACM Symposium
  on Parallelism in Algorithms and Architectures, SPAA '17, Association for
  Computing Machinery, New York, NY, USA, 2017, p. 101–110.
\newblock \href {https://doi.org/10.1145/3087556.3087579}
  {\path{doi:10.1145/3087556.3087579}}.
\newline\urlprefix\url{https://doi.org/10.1145/3087556.3087579}

\bibitem{BENIAMINI2019}
G.~Beniamini, O.~Schwartz,
  \href{https://doi.org/10.1145/3323165.3323188}{Faster matrix multiplication
  via sparse decomposition}, in: The 31st ACM Symposium on Parallelism in
  Algorithms and Architectures, SPAA '19, Association for Computing Machinery,
  New York, NY, USA, 2019, p. 11–22.
\newblock \href {https://doi.org/10.1145/3323165.3323188}
  {\path{doi:10.1145/3323165.3323188}}.
\newline\urlprefix\url{https://doi.org/10.1145/3323165.3323188}

\bibitem{SCHWARTZ2024}
O.~Schwartz, S.~Toledo, N.~Vaknin, G.~Wiernik, Alternative basis matrix
  multiplication is fast and stable, in: 2024 IEEE International Parallel and
  Distributed Processing Symposium (IPDPS), 2024, pp. 38--51.
\newblock \href {https://doi.org/10.1109/IPDPS57955.2024.00013}
  {\path{doi:10.1109/IPDPS57955.2024.00013}}.

\bibitem{MARTENSSON2025}
E.~Mårtensson, P.~S. Wagner,
  \href{https://epubs.siam.org/doi/abs/10.1137/1.9781611978759.4}{The Number of
  the Beast: Reducing Additions in Fast Matrix Multiplication Algorithms for
  Dimensions up to 666}, 2025, pp. 47--60.
\newblock \href {https://doi.org/10.1137/1.9781611978759.4}
  {\path{doi:10.1137/1.9781611978759.4}}.
\newline\urlprefix\url{https://epubs.siam.org/doi/abs/10.1137/1.9781611978759.4}

\bibitem{PAN1984}
V.~Pan, How to multiply matrices faster, Springer-Verlag, Berlin, Heidelberg,
  1984.

\bibitem{SCHWARTZ2025}
O.~Schwartz, E.~Zwecher, \href{https://arxiv.org/abs/2508.01748}{Towards faster
  feasible matrix multiplication by trilinear aggregation} (2025).
\newblock \href {http://arxiv.org/abs/2508.01748} {\path{arXiv:2508.01748}}.
\newline\urlprefix\url{https://arxiv.org/abs/2508.01748}

\bibitem{ROMANI1981}
F.~Romani, \href{https://doi.org/10.1137/0211020}{Some properties of disjoint
  sums of tensors related to matrix multiplication}, SIAM Journal on Computing
  11~(2) (1982) 263--267.
\newblock \href {http://arxiv.org/abs/https://doi.org/10.1137/0211020}
  {\path{arXiv:https://doi.org/10.1137/0211020}}, \href
  {https://doi.org/10.1137/0211020} {\path{doi:10.1137/0211020}}.
\newline\urlprefix\url{https://doi.org/10.1137/0211020}

\bibitem{LADERMAN1976}
J.~D. Laderman, A noncommutative algorithm for multiplying $3 \times 3$
  matrices using 23 multiplications, Bulletin of the American Mathematical
  Society 82~(1) (1976) 126--128.
\newblock \href {https://doi.org/10.1090/S0002-9904-1976-13988-2}
  {\path{doi:10.1090/S0002-9904-1976-13988-2}}.

\bibitem{LI2026}
J.~Alman, B.~Li, Personal communication (2026).

\bibitem{ALPHAEVOLVE2025}
A.~Novikov, N.~Vũ, M.~Eisenberger, E.~Dupont, P.-S. Huang, A.~Z. Wagner,
  S.~Shirobokov, B.~Kozlovskii, F.~J.~R. Ruiz, A.~Mehrabian, M.~P. Kumar,
  A.~See, S.~Chaudhuri, G.~Holland, A.~Davies, S.~Nowozin, P.~Kohli, M.~Balog,
  \href{https://arxiv.org/abs/2506.13131}{Alphaevolve: A coding agent for
  scientific and algorithmic discovery} (2025).
\newblock \href {http://arxiv.org/abs/2506.13131} {\path{arXiv:2506.13131}}.
\newline\urlprefix\url{https://arxiv.org/abs/2506.13131}

\bibitem{DUMAS2025}
J.-G. Dumas, C.~Pernet, A.~Sedoglavic,
  \href{https://arxiv.org/abs/2506.13242}{A non-commutative algorithm for
  multiplying 4x4 matrices using 48 non-complex multiplications} (2025).
\newblock \href {http://arxiv.org/abs/2506.13242} {\path{arXiv:2506.13242}}.
\newline\urlprefix\url{https://arxiv.org/abs/2506.13242}

\bibitem{SEDOGLAVICTABLE}
A.~Sedoglavic, \href{https://fmm.univ-lille.fr/}{Collection of fast matrix
  multiplication algorithms}, accessed: 2026-02-03 (2025).
\newline\urlprefix\url{https://fmm.univ-lille.fr/}

\bibitem{DEGROOTE19781}
H.~F. {de Groote},
  \href{https://www.sciencedirect.com/science/article/pii/0304397578900385}{On
  varieties of optimal algorithms for the computation of bilinear mappings i.
  the isotropy group of a bilinear mapping}, Theoretical Computer Science 7~(1)
  (1978) 1--24.
\newblock \href {https://doi.org/https://doi.org/10.1016/0304-3975(78)90038-5}
  {\path{doi:https://doi.org/10.1016/0304-3975(78)90038-5}}.
\newline\urlprefix\url{https://www.sciencedirect.com/science/article/pii/0304397578900385}

\bibitem{DEGROOTE19782}
H.~F. {de Groote},
  \href{https://www.sciencedirect.com/science/article/pii/0304397578900452}{On
  varieties of optimal algorithms for the computation of bilinear mappings ii.
  optimal algorithms for 2 × 2-matrix multiplication}, Theoretical Computer
  Science 7~(2) (1978) 127--148.
\newblock \href {https://doi.org/https://doi.org/10.1016/0304-3975(78)90045-2}
  {\path{doi:https://doi.org/10.1016/0304-3975(78)90045-2}}.
\newline\urlprefix\url{https://www.sciencedirect.com/science/article/pii/0304397578900452}

\bibitem{VONZURGATHEN1999}
J.~V.~Z. Gathen, J.~Gerhard, Modern Computer Algebra, 2nd Edition, Cambridge
  University Press, USA, 2003.

\bibitem{KEMPER2025}
A.~Kemper, \href{https://github.com/a1880/matrix-multiplication}{From ${F}_2$
  to $\mathbb{Z}$ solutions of {Brent Equations}}, preprint. Available at
  \url{https://github.com/a1880/matrix-multiplication} (July 2025).
\newline\urlprefix\url{https://github.com/a1880/matrix-multiplication}

\end{thebibliography}

\end{document}